\newcommand{\tr}{\operatorname{tr}}
\def\idty{{\leavevmode\rm 1\mkern -5.4mu I}} 
\def\id{{\rm id}}
\def\ket #1{\vert #1\rangle}
\def\ketbra #1#2{\vert #1\rangle \langle #2\vert}
\def\tr{\mathop{\rm tr}\nolimits}
\newcommand*{\cC}{\mathcal{C}}
\newcommand*{\cE}{\mathcal{E}}
\newcommand*{\cF}{\mathcal{F}}
\newcommand*{\cH}{\mathcal{H}}
\newcommand*{\cM}{\mathcal{M}}
\newcommand*{\cN}{\mathcal{N}}
\newcommand*{\cO}{\mathcal{O}}
\newcommand*{\cP}{\mathcal{P}}
\newcommand*{\cX}{\mathcal{X}}
\newcommand*{\cZ}{\mathcal{Z}}
\newcommand*{\dd}{\textrm{d} }
\newcommand*{\cl}{\textrm{cl}} 
\newcommand*{\inn}{\textrm{in}} 
\newcommand*{\out}{\textrm{out}} 
\newcommand*{\suc}{\textrm{succ}} 
\newcommand*{\ec}{\textrm{EC}} 
\newcommand*{\av}{\textrm{av}} 
\newcommand*{\maj}{\textrm{Maj}} 
\newcommand*{\gauss}{\textrm{Gauss}} 
\newcommand*{\iid}{\textrm{IID}} 
\newcommand*{\ther}{\textrm{th}} 
\newcommand*{\ot}{\textrm{OT}} 
\newcommand*{\asym}{\textrm{as}}
\def\PP{{\mathbb P}}
\newcommand\pr[1]{\ensuremath{\mathrm{Pr}[#1]}}
\newtheorem{thm}{Theorem}[section]
\newtheorem{de}[thm]{Definition}
\newtheorem{lem}[thm]{Lemma}
\def\cH{{\mathcal H}}
\definecolor{myred}{rgb}{1,0,0}
\definecolor{myblue}{rgb}{0,0,0.8}
\definecolor{myyellow}{rgb}{0.9,0.8,0}
\definecolor{mygreen}{rgb}{0,0.6,0}
\definecolor{myorange}{rgb}{0.6,0.6,0}
\definecolor{mycerul}{rgb}{0,0.6,1}
\begin{document}

\title{Continuous-Variable Protocols in the Noisy-Storage Model}

\author{Fabian Furrer}
\affiliation{NTT Basic Research Laboratories, NTT Corporation, 3-1 Morinosato-Wakamiya, Atsugi, Kanagawa, 243-0198, Japan. } 
\affiliation{Department of Physics, Graduate School of Science,
University of Tokyo, 7-3-1 Hongo, Bunkyo-ku, Tokyo, Japan, 113-0033.}

\author{Christian Schaffner} 
\affiliation{Institute for Logic, Language and Computation (ILLC)
University of Amsterdam, The Netherlands}
\affiliation{Centrum Wiskunde \& Informatica (CWI), Amsterdam, The Netherlands}

\author{Stephanie Wehner} 
\affiliation{QuTech, Delft University of Technology, Lorentzweg 1, 2628 CJ Delft, Netherlands}

\begin{abstract}
We present the first protocol for oblivious transfer that can be implemented with an optical continuous-variable system, and prove its security in the noisy-storage model. This model allows security to be achieved by sending more quantum signals than an attacker 
can reliably store at one specific point during the protocol. 
Concretely, we determine how many signals need to be sent in order to achieve security by 
establishing a trade-off between quantum uncertainty generated in the protocol and the classical capacity of the memory channel. As our main technical tool, we study and derive new uncertainty relations for continuous-variable systems. Finally, we provide explicit security parameters for realistic memory models.
\end{abstract}

\maketitle
\section{Introduction}
Quantum key distribution (QKD) offers security that rests only on the laws of quantum mechanics~\cite{Wiesner83,Bennett84,Ekert91}. 
Yet, there are still important cryptographic protocols which cannot be realized without additional assumptions, even using quantum communication~\cite{mayers1997,mayers:trouble,lo&chau:bitcom,lo&chau:bitcom2,lo1997,kretch:bc,buhrman2012complete}. Examples of such protocols are oblivious-transfer (OT), bit commitment (BC), and secure password-based identification, where two distrustful parties (Alice and Bob) engage in a protocol and want to be ensured that the other party cannot cheat, or maliciously influence the outcome.

Due to the great practical importance of problems such as secure identification one is willing to rely on assumptions in order to achieve security. Classically, these are usually computational assumptions that are not fully future proof and can possibly be broken retroactively using a quantum computer. Another line of research pursues physical assumptions on the adversary, such as imposing limits on his abilities to
store information~\cite{Maurer92b,cachin:bounded}. This assumption is especially appealing in quantum communication where advanced technologies such as long-lived quantum memories are very challenging and expensive. Moreover, as opposed to computational assumptions, it provides the benefit that they are indeed fully future proof: even if the adversary obtains a much larger quantum memory after the protocol, security cannot be broken retroactively. 

Given any constraint on the adversary's storage device, security can always be obtained by sending sufficiently many signals during the course of the protocol. 
Generalizing the model of classical 
bounded storage~\cite{Maurer92b}, the so-called bounded-quantum-storage model assumes that the adversary can only store a certain number of 
qubits~\cite{damgaard2008,damgaard2007}. More generally, the noisy-storage model~\cite{wehner2008} ensures security for arbitrary noisy memory devices. Specifically, a link has been made between security and the classical capacity~\cite{Koenig2012}, entanglement cost~\cite{Berta2012}, 
and quantum capacity~\cite{berta2013, Dupuis2015} of the adversary's quantum storage device.
An important feature of the corresponding protocols is that they do not require the honest protocol participants to have any quantum memory at all.
In particular, they can be implemented using BB84 or six-state QKD protocols, which have been experimentally demonstrated~\cite{ng2012,erven2014experimental}.  

Yet, all protocols proposed so far~\cite{damgaard2008,wehner2008,Koenig2012,wcsl2010,schaffner2010,ng2012,Berta2012} are based on discrete-variable systems requiring single-photon detectors that are, despite recent improvements, still challenging technologies~\cite{lo2014}. Here, we propose the first protocols based on optical continuous-variable systems, where the information is encoded in the $X$ and $P$ quadrature of the electromagnetic field (see, e.g.,~\cite{Weedbrook12}). Optical continuous variable implementations provide practical benefits  since transmission, measurements (homodyne detection) and sometimes also preparations require only standard telecommunication technologies. These benefits allow easy integration of the protocols into current classical networks. Moreover, state preparation and homodyne detection are highly efficient and robust technologies permitting high clock rates, and they are available as on-chip components~\cite{masada2015}.

We present a protocol for OT as well as BC, and derive sufficient conditions for security in the noisy-storage model depending on the classical capacity of the malicious party's memory channel similar to~\cite{Koenig2012}. For instance, we show that security can be obtained if twice the classical capacity is lower than the uncertainty that is generated by $X$ and $P$ measurements plus the error-correction (EC) amount that is required to overcome the information loss during transmission. 
The latter term is crucial for CV protocols since, compared to discrete variable protocols, CV protocols require the exchange of a significant amount of EC information.  

The main technical ingredients in our security proof are novel entropic uncertainty relations. While we derive an uncertainty relation that holds without assumptions by using majorization techniques from~\cite{rudnicki2015}, it turns out that even though security is in principle possible it is not sufficient to obtain a good trade-off in parameters. We overcome this technical problem by showing uncertainty relations under reasonable assumptions, namely, that the adversary's encoding is Gaussian or independent and identical over only a limited number of modes. The security trade-off is then analyzed in both cases for a memory channel modeled by a lossy and noisy bosonic Gaussian channel. Our work opens the door for the development of continuous-variable protocols in the noisy-storage model.

\section{ Oblivious transfer in the noisy-storage model}
We first focus on OT along which we discuss the essential ideas behind the security in the noisy-storage model. A protocol for OT is especially appealing since any two-party cryptographic problem in which Alice and Bob do not trust each other can in principle be solved using OT as a building block~\cite{kilian1988founding}.
While the quantum part of the protocols for OT and BC protocol are similar, the classical post-processing is different. We consider a randomized version of OT, where Alice has no input and gets as output two bit strings $s_0,s_1$ and Bob has input $t$ and obtains a bit string $\tilde s$. If both are honest, we require that $\tilde s=s_t$ (correctness). If Alice is honest, we require that Bob can only know one of the strings. If Bob is honest, we demand that Alice does not learn $t$. No requirements are made if both are dishonest.
These security requirements are made precise in a composable fashion in~\ref{app:OTSecDef} by demanding that it is indistinguishable from a perfectly correct protocol with probability $\epsilon_C$, and from a perfectly secure protocol for honest Alice and Bob with $\epsilon_A,\epsilon_B$.

The quantum part of the protocol for CV is based on a QKD protocol using Gaussian modulated squeezed states.
As it is conceptually simpler, we consider an entanglement based version of the protocol, although its prepare and measure version is straightforward. The source is given by the CV equivalent of a maximally entangled state, namely, a  two-mode squeezed state simply referred to as EPR (Einstein,Podolski,Rosen~\cite{epr35}) state in the following (see Appendix~\ref{app:EPR}). The measurements are homodyne detections with a discretized outcome range into binnings of length $\delta$. Mathematically, they are modeled as coarse-grained $X$ and $P$ measurements and in the following denoted by $X_\delta$ and $P_\delta$ (see, e.g.,~\cite{furrer2014PQUR}). The measurement choice of Alice (Bob) in round $i$ are denoted by $\theta_A^i$ ($\theta_B^i$), where $\theta_A^i=0,1$ stands for performing $X_\delta,P_\delta$. The quantum protocol is then simply:
\begin{itemize}
\item[(Q1)] Alice creates $n$ EPR pairs of which she sends each half to Bob. 
\item[(Q2)] Alice and Bob measure independently $X_\delta$ ($\theta = 0$) or $P_\delta$ ($\theta=1$) according to $\theta_A = \theta_A^1\ldots\theta_A^n$ and $\theta_B = \theta_B^1\ldots\theta_B^n$, where $\theta_A$ and $\theta_B$ are chosen uniformly random in $\{0,1\}^n$~\footnote{Note that due to transmission losses in Bob's mode, he scales his outcomes with $1/\sqrt{\tau}$, where $\tau$ is the transmissivity.}. The strings of outcomes for Alice and Bob are denoted by $Z$ and $Y$.     
\item[(Q3)]They wait for a fixed time $\Delta t$. 
\end{itemize}
The crucial property of this protocol is that approximately half of the strings $Z$ and $Y$ are strongly correlated. However, neither party knows which part. This concept has been formalized in~\cite{Koenig2012} under the name of \emph{weak string erasure}. 

The classical part of the protocol proceeds as follows:
\begin{itemize}
\item[(OT1)] Alice sends Bob her basis choice $\theta_A$. Bob defines the set $I_t=\{i \mid \theta_A^i = \theta_B^i \}$ and its complement $I_{1-t}$ according to his choice bit $t$. He sends $I_0,I_1$ to Alice. 
\item[(OT2)] Alice forms the strings $Z_k= (Z^i)_{i\in I_k}$ for $k=0,1$, and computes error-correction information $W_0,W_1$ individually for $Z_0,Z_1$ and sends it to Bob. Bob then corrects the string corresponding to his choice $Y_t=(Y^i)_{i\in I_t}$ using $W_t$ to obtain $Y'_t$.~\footnote{Note that we could introduce an additional step that Bob can check if the error correction worked properly, namely, by Alice sending a hash of $Z_0,Z_1$. However, Bob is not allowed to tell Alice whether the test was passed or not, since Alice could design attacks which lead to pass or failure of the test depending on his choice $t$.} 
\item[(OT3)] Alice selects random $2$-universal hash functions $f_0,f_1$ from $X_0$ and $X_1$ to $\ell$-bit strings and outputs $s_k=f_k(X_k)$, $k=0,1$. She then sends $f_0,f_1$ to Bob who outputs $ \tilde s= f_t(Y'_t)$.  
\end{itemize}
We further assume that if a honest party obtains a value from the other party that is not conformal with the protocol, it generates a random output. This ensures that the protocol always terminates with an output.

It is easy to verify that executing the above classical protocol after (Q1)-(Q3) satisfies the correctness condition for OT with $\epsilon_C$ that depends on the error-correction protocol.  Moreover, security for honest Bob simply follows since the only information leaving his lab are the sets $I_0,I_1$, which are uncorrelated with $t$. For a more rigorous proof of  composable security, we refer to~\cite{damgaard2007}.

More interesting is the security for honest Alice. In fact, if Bob has a quantum memory that allows him to faithfully store all the modes sent by Alice over a time longer than $\Delta t$,  he can cheat perfectly. He only has to wait to receive Alice's basis choice $\theta_A$ and measure all modes in the corresponding basis. But if Bob has only noisy quantum storage he might not have enough information to obtain both strings $s_0$ and $s_1$. As illustrated in Fig.~\ref{fig:Memory}, we model Bob's memory attack by an encoding operation $\cE$ that maps the $n$ modes to the input space $Q_\inn$ of his quantum memory correlated to classical information $K$. After that Bob stores $Q_\inn $ in his quantum memory for time $\Delta t$ until he receives $\theta_A$. We model the corresponding memory channel $\cM$ of $\nu n$ quantum channels $\cF$, i.e., $\cM=\cF^{\otimes \nu n}$. Bob's information at the end of the protocol is denoted by $B'$ and given by all the classical information obtained from Alice plus $K$ and $\cM(Q_\inn)$. 

\begin{figure}\begin{center}\includegraphics*[width=8.5cm]{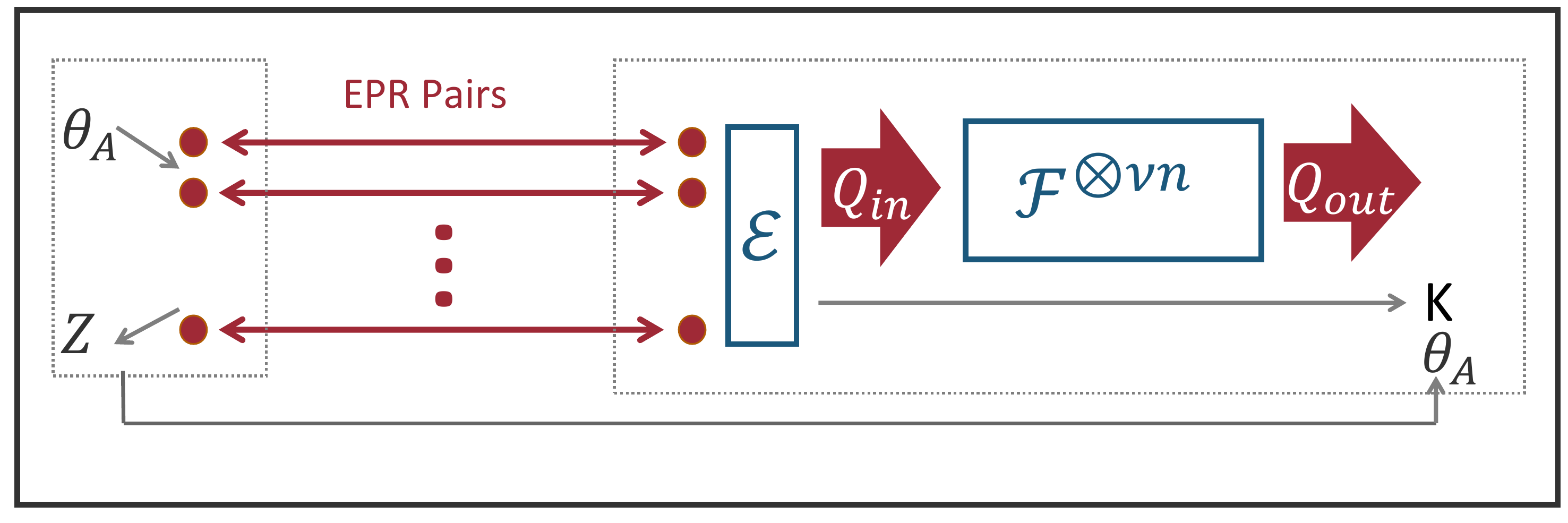}\caption{\label{fig:Memory} 
The scenario of a dishonest Bob. The memory attack is modeled by an encoding $\cE$ that maps (conditioned on some classical outcome $K$) the $n$ modes to the memory input $Q_{\inn}$. The memory $\cM$ is modeled by $\nu n$ uses of the channel $\cF$.  We consider the situations where the encoding $\cE$ is arbitrary, a mixture of Gaussian channels or independent and identical over a small numbers of signals $m$.
}\end{center}\end{figure}

Our goal is to quantify the relation between the capacity of Bob's quantum memory and the security for Alice. 
The latter is obtained if one of the strings, let's say $s_0$ for simplicity, is uniformly random and uncorrelated with $B'$. 
Fortunately, a standard method known in cryptography as privacy amplification can be used to ensure the desired property even if the adversary still holds significant amount
of information about $Z_0$: if the length $\ell$ of the hashed string $s_0$ is roughly equal to the conditional smooth min-entropy $H^{\epsilon_1}_{\min}(Z_0|B')$ of the input of the hash function $Z_0$ given Bob's information $B'$, i.e., $\ell =  H^{\epsilon_1}_{\min}(Z_0|B')-2\log1/(\epsilon_A-4\epsilon_1 )$~\cite{renner05}, then $s_0$ is $\epsilon_A$-close in trace distance to 
uniform and uncorrelated from $B'$.
For $\epsilon_1=0$, the smooth min-entropy is equal to the negative logarithm of the success probability that Bob can guess $Z_0$ by measuring $B'$, and its generalization for $\epsilon\geq 0$ is obtained by maximizing it over all states that are $\epsilon$-close~\cite{renner05,koenig-2008} (see Appendix~\ref{app:SmoothEntropies} for details). 

We thus have to lower bound $H^{\epsilon_1}_{\min}(Z_0|B')$, for which we follow ideas developed for the discrete variable case~\cite{Koenig2012,schaffner2010}.  
First we use an inequality from~\cite{Koenig2012} to bound $H^\epsilon_{\min}(Z_0|B')$ by means of the classical capacity of Bob's memory channel. Note that Bob's information $B'$ at the end of the protocol is given by $B' =\cM(Q_\inn)B_\cl $ with classical information $B_\cl = K\theta_AW_0W_1$. Then, if we denote by $\cP^\cM_{\suc}(k)$ the optimal success probability to reliably send $k$ classical bits through $\cM$, we have that~\cite{Koenig2012}
\begin{equation}\label{eq:psucc}
2^{-H^{\epsilon_1}_{\min}(Z_0|B')} \leq     \cP^\cM_{\suc}(\lfloor H^{\epsilon_2}_{\min}(Z_0|B_\cl)- \log1/(\epsilon_1-\epsilon_2)^2  \rfloor ) \, .
\end{equation}
This inequality reduces the problem to bounding the smooth min-entropy conditioned on Bob's classical information only.

An additional problem arises because we do not know which of the two strings $Z_0,Z_1$ Bob does not learn. This problem can be solved by using the min-entropy splitting theorem~\cite{wulli}, which says that if the uncertainty about the whole string is high, i.e.  $H^{\epsilon_2}_{\min}(Z|B_\cl)\geq \lambda$, then so it is in average for the sub-strings $Z_0$ and $Z_1$. More precisely, there exists a random variable $D$ such that $H^{\epsilon_2}_{\min}(Z_D|B_\cl D)\geq \lambda/2-1$. Note that this result relies crucially on the fact that the information upon which one conditions, i.e., $B_\cl D$ is classical and not quantum~\cite{splitting}. 
Finally, we can remove the dependence on the error-correction information $W_0,W_1$  by simply subtracting the maximal information contained in  $W_0,W_1$, i.e., the number of bits $\ell_\ec = \log |W_0W_1|$.

In conclusion, we obtain security for Alice if $\ell =  - 1/2 \log  \cP^\cM_{\suc}( n r_\ot  )-\log 1/(\epsilon_A-4\epsilon_1)$ where 
\begin{equation}
r_\ot = 1/2\left(  \lambda^{\epsilon_2} (n)-\frac{\ell_\ec}{n}\right) - \frac 1n\left(2\log\frac1{(\epsilon_1-\epsilon_2)} - 1\right) \, ,
\end{equation}
and $\epsilon_A > 4\epsilon_1 > \epsilon_2 \geq 0$. 
Here,  $\lambda^\epsilon (n)$ stands for a lower bound on the smooth min-entropy rate 
\begin{equation}\label{eq:URrate}
\frac 1n H_{\min}^{\epsilon}(Z|\theta_A  K) \geq \lambda^{\epsilon}(n) \, . 
\end{equation} 

We can now relate Alice's security to the classical capacity $C_\cl (\cF)$ of $\cF$ whenever the success probability of reliably sending classical information through $\cF$ at a rate $R$ higher than $C_\cl (\cF)$ decays exponentially  $\cP^{\cF^{\otimes n}}_{\suc}( nR) \leq 2^{-n\xi (R-C_\cl (\cF))}$~\footnote{In information theory this is referred to as a strong converse for the classical capacity and has been shown for many channels.} (see Discussion for examples). Then, by a simple calculation we find that security for Bob can be obtained for large enough $n$ if the condition
\begin{equation}\label{eq:Cond}
r_\ot(n) - \nu C_\cl (\cF) > 0 
\end{equation}
is satisfied. Moreover, the length of $s_0,s_1$ can be chosen as $\ell = n \xi (r_\ot(n) - C_\cl (\cF) ) -\cO(\log1/\epsilon_A)$. 

In order to analyse the security we have to find tight lower bounds $\lambda^\epsilon (n)$ for the inequality~\eqref{eq:URrate}, which is a special kind of uncertainty relation. The other important quantity is the EC rate which in practice is directly determined by the protocol. For the following discussions we use the standard formula $\ell_\ec/n = H(X^A_\delta)-\beta I(X^A_\delta :X^B_\delta)$, where $X^{A}_\delta$ and $X^{B}_\delta$ are the random variables induced if both players are measuring $X_\delta$. The parameter $\beta\leq 1$ is called the efficiency of the EC protocol and values of about $0.96$ are practical using currently available codes~\cite{jouguet2011,jouguet2014,gehring2014}.

\section{Bit commitment in the noisy storage model}
A bit commitment protocol consists of a commitment phase where Bob commits to a bit $c$, and an open phase where Alice learns $c$. Honest Alice wants to be ensured that Bob cannot change his commitment $c$ after completion of the commitment phase (binding). And Bob wants that Alice cannot learn $c$ before the open phase (hiding). Similarly to the OT protocol, we use composable security definitions (described in Appendix~\ref{app:BCSecDef}) using security parameters $\epsilon_C,\epsilon_H,\epsilon_B$ for the correctness, hiding and binding conditions. 

The protocol for commitment is similar to the quantum part of OT and consists of steps (Q1)-(Q3) except that in (Q2) Bob is measuring all his signals in the basis corresponding to the bit $c$ he wants to commit to $\theta_B=c,\ldots,c$. The open phase is purely classical~\cite{damgaard2008}:   
\begin{itemize}
\item[(BC1)] Bob sends $c$ and $Y$ to Alice who defines the substring $Z_I,Y_I$ of $Z,Y$ containing the elements $I=\{i | \theta^i_A = c\}$. She then accepts if $Y_I\in B_{\epsilon_C} (Z_I)$ and rejects otherwise.    
\end{itemize}
Here, the set $B_{\epsilon}(Z_I)$ is the $\epsilon$-typical set of outcomes for Bob if both measure in the same basis and Alice obtains outcome $Z_I$. If $n$ is large enough, Alice will accept with probability $\epsilon_C$ due to the property of typical sequences. Moreover, the hiding condition is satisfied perfectly since Bob does not send any information in the commitment phase.  

As in OT, it is evident that Bob can cheat perfectly if he has a perfect quantum memory. But under the same assumptions on Bob's memory as discussed for the OT protocol, we show in Appendix~\ref{app:BCanalysis} that the hiding condition is satisfied for sufficiently large $n$ if 
\begin{equation}\label{eq:CondBC}
1/2 ( \xi \lambda^\epsilon (n) - \log V_{\epsilon_C}/n)  - \nu  C_\cl(\cF)  >  0 \, ,
\end{equation}
where $V_{\epsilon_C} = \max_Y |B_{\epsilon_C}^{-1}(Y)|$ with $B_\varepsilon^{-1}(Y)=\{Z \, | Y\in B_\varepsilon(Z) \}$ the set of Alice's outcomes $Z$ for which $Y$ would be accepted. Note that in our case, $|B_\epsilon^{-1}(Y)|$ is independent of $Y$ and the maximization can be omitted.  In particular, if the transmissivity of the channel between Alice and Bob is $\tau$, one finds that $\log V_\epsilon = n H(X^B_{\sqrt{\tau}\delta}|X^A_\delta)$, which is approximately the reverse reconciliation rate $ H(Y_{\delta}|Z_\delta)$ increased by $\log1/\sqrt{\tau}$ (see Appendix~\ref{app:BCanalysis} for details).  

\begin{figure}\begin{center}\includegraphics*[width=8cm]{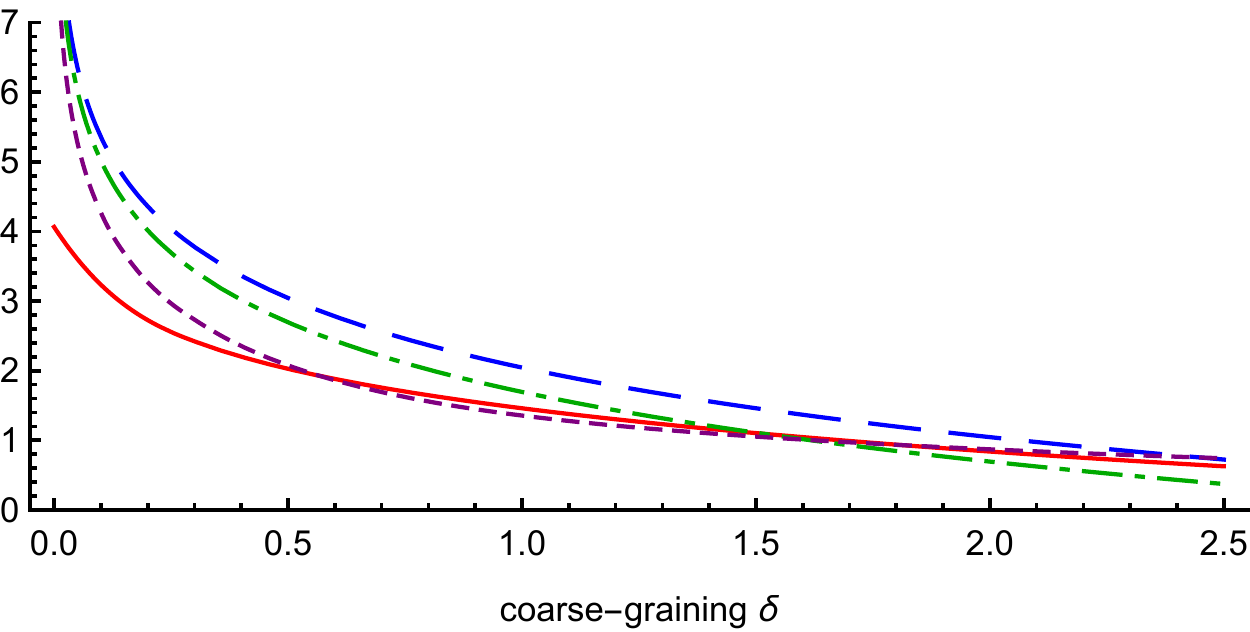}\caption{\label{fig:UR} 
One can achieve security in the noisy storage model without assumptions if the 
error-correction rate $\ell/n$ (short dashed) is below the entropy rate $\lambda^\epsilon_{\text{Maj}} $ (straight), under Gaussian encoding assumption if  $\ell/n$ is below $\lambda^\epsilon_{\gauss} $ (long dashed), and under independent and identical encodings over  $m=25$ signals if $\ell/n$ is below $\lambda^\epsilon_{\iid} $ (dashed-dotted). The horizontal axis describes the coarse graining $\delta$ of the homodyne detection of the $X$ and $P$ quadrature. We chose practical protocol parameters of $n=10^8$ signals and security parameters $\epsilon_A=\epsilon_B=10^{-9}$. The error-correction rate is plotted for an efficiency of $\beta=0.96$ and correlations obtained by an EPR state with squeezing of $10.8$dB and one-sided losses of $5\%$. We further emphasize that the situation looks similar for the BC protocol for the considered parameters. 
}\end{center}\end{figure}

 \section{CV uncertainty relations for the noisy storage model}
Both the security of OT and BC rely on tight bounds $\lambda^\epsilon$ in~\eqref{eq:URrate}. To obtain such bounds using~\eqref{eq:psucc} we first need an uncertainty relation.
In order to derive it, let us grant additional power to a dishonest Bob who can prepare an ensemble of states 
$\{\rho^k\}_k$ according to $K$ himself and send it to Alice who performs on any mode randomly $X_\delta$ or $ P_\delta$. After completing all measurements, Alice sends her basis choice $\theta_A$ to Bob who has to guess the outcomes $Z$ of the measurements. As Bob does not know the measurement choice prior to preparation, the uncertainty principle forbids Bob to know $Z$ perfectly, i.e., there is no state for which the outcomes of both $X$ and $P$ are certain. As the uncertainty principle holds independently for any state of the ensemble, it is intuitively clear that $K$ cannot bring any advantage, which is why we omit it in the following (see e.g.~\cite{Nelly12} for details). 

Traditionally, the uncertainty principle for $X$ and $P$ is captured as a lower bound on the product of their standard deviations $\sigma_X\sigma_P \geq \hbar/ 2$~\cite{kennard1927}. Here, we are interested in an entropic version for coarse grained measurements. For the sake of illustration, let us first consider the equivalent relation for the well-known Shannon entropy, i.e., $H(X) = - \sum_x p_x \log p_x$ if $X$ is distributed according to $\{p_x\}$. For $n=1$, the conditional Shannon entropy of interest is given by $H(Z|\theta_A)= 1/2( H(X_\delta ) + H(P_\delta) )$, where $H(X_\delta)$ denotes the Shannon entropy of the outcomes of the $X_\delta$ measurement and similar for $P_\delta$. However, it is known that $H(X_\delta) + H(P_\delta) \geq -\log c(\delta ) $ with $c(\delta) = \delta^2/(\pi e \hbar)$~\cite{Birula84}, and thus, $H(Z|\theta_A) \geq -1/2 \log c(\delta)$. This inequality can straightforwardly be generalized to $n\geq 1$.   

Let us now  turn to the uncertainty relation of interest in~\eqref{eq:URrate} expressed by the smooth min-entropy instead of the von Neumann entropy. Such uncertainty relations have previously been analysed for maximally complementary qubit measurements~\cite{damgaard2007,Nelly12}. Here, we analyze this uncertainty relation for the first time for $X$ and $P$ measurements. In strong contrast to the situation of the Shannon entropy discussed before, a non-trivial relation is only possible for coarse-grained $X$ and $P$ measurement but not for continuous ones (see Appendix~\ref{sec:uncertaintypreliminaries}). Hence, the lower bound has to be derived for $X_\delta$ and $P_\delta$ directly, which makes it very challenging. 

\begin{figure}\begin{center}\includegraphics*[width=8.5cm]{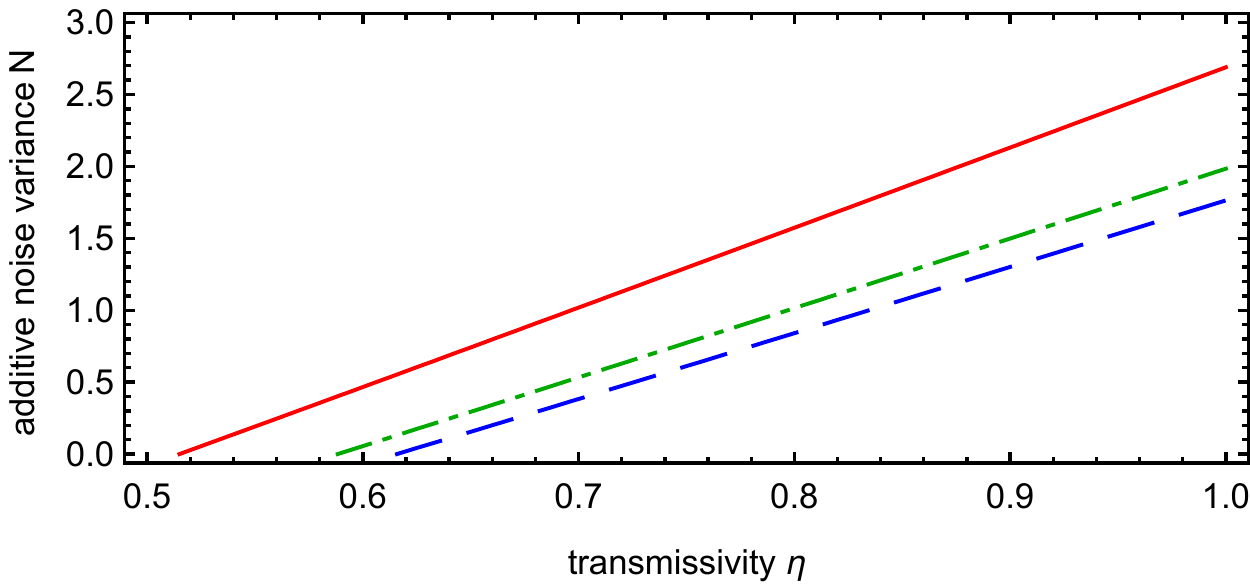}\caption{\label{fig:SecurityGauss} 
The left side of the plots correspond to secure regions for the OT protocol under Gaussian encoding assumption ($\lambda_{\text{Gauss}}^\epsilon $), depending on the transmissivity $\eta$ (horizontal axis) and the additive noise variance $N$ (vertical axis) of malicious Bob's Gaussian memory channel. The thermal noise of the memory is set to $N_\ther =0$ and the maximal photon constraint to $N_{\max}=30$. The different curves correspond to different numbers of quantum memories $\nu n$, squeezing strengths $s$ and one-sided transmissivity $\tau$ of the EPR state for values $(\nu,s,\tau) = (1/8,10.8,0.95)$ (solid), $(1/10,10.8,0.90)$ (dashed) and $(1/12,7.8,0.9)$ (dashed-dotted). We further set the coarse graining to $\delta = 0.2$ (largest optimal value) and choose the other parameters as in Fig.~\ref{fig:UR}. 
}\end{center}\end{figure}

We derive three different uncertainty bounds denoted by $\lambda^\epsilon_{\maj}$, $\lambda_{\gauss}^{\epsilon}$ and $\lambda_{\iid}^\epsilon$. The first $\lambda^\epsilon_{\maj}$ is valid without restrictions on the states $\rho^k$. The derivation is based on a result by Landau and Pollak~\cite{Landau61} that for any two fixed intervals $I,J\in\mathbb R$, the probability $q[I]$ to measure X in $I$,  and $p[J]$ to measure $P$ in $J$ have to satisfy the constraint $\cos^{-1}\sqrt{q[I]} + \cos^{-1}\sqrt{p[J]} \geq \cos^{-1}\gamma$ where $\gamma$ is a function of the length of the intervals of $I$ and $J$. This relation yields an infinite number of constraints for the probability distributions for $X_\delta$ and $P_\delta$. However, it is difficult to derive a lower bound by including all constraints. The bound $\lambda^\epsilon_{\maj}$ is obtained by a relaxation using the majorization technique from~\cite{rudnicki2015} and can be computed recursively (see Appendix~\ref{app:MajUR} for details). The drawback is that the relaxation is not optimal leading to a lose bound leaving the problem open to find a tighter relaxation.  

This problem is overcome in the other two bounds at the expense of additional assumptions. The bound $\lambda_{\gauss}^{\epsilon}$ holds under the assumption that the state is a mixture of Gaussian states. Applied to the noisy-storage model, this assumption requires that the encoding operation $\cE$ (not the quantum memory $\cF$) has to be a mixture of Gaussian operations, which can be justified since non-Gaussian operations are still very challenging in practice and more importantly, their implementations are generally not deterministic.  The explicit form is $ \lambda_\gauss^\epsilon(n)  = \sup ( B^\alpha_{\gauss}(\delta) - {1}/({n(\alpha-1)}) \log {2}/{\epsilon^2} )$, where  
\begin{equation}
B^\alpha_{\text{Gauss}}(\delta) = \frac{1}{1-\alpha} \log\frac
12( 1 + ({1}/{\alpha}) \left( {\delta^2}/({\pi\hbar})\right)^{(\alpha-1)} )
\end{equation}
and the optimization is over all $\alpha>1$. The bound is derived using continuous approximations and the details are in Appendix~\ref{app:GaussianUR}. 

The last bound $\lambda_{\iid}^{\epsilon}$ applies under the assumption that the ensemble states are independent and identical over only $m$-mode states ($m \ll n$). This assumption requires that Bob's encoding operation $\cE$ acts in an identical and independent way on only $m$ modes, which in the language of QKD corresponds to collective attacks. It is a reasonable assumption since coherent operations of all the $n$ modes may require a lot of resources. Moreover, if Bob wants to coherently act on all $n$ modes he has to store all the incoming modes until all $n$ modes arrived, which in principle already requires a short-time memory. It might also be possible for Alice to send the modes with delays in order to enhance her security.  The bound is derived via a reduction of the smooth min-entropy for independent and identical distributions to the Shannon entropy~\cite{renner05,Furrer10} and given by $\lambda_{\iid}^{\epsilon} = - 1/2\log c(\delta)  - \cO(m^2 \sqrt{{m}/{n}} )$ (see Appendix~\ref{URrateIID}). 

The three bounds are shown in Fig.~\ref{fig:UR}. We see that for large $n$, $\lambda_{\gauss}^{\epsilon}$ and $\lambda_{\iid}^\epsilon$ are approximately equal. In fact, in the limit $n\rightarrow \infty$ we find that both converge to the optimal bound determined by the Shannon entropy $-1/2\log c(\delta)$. We further plotted the EC rate for the OT protocol since a necessary condition for security is that $ \lambda^\epsilon (n)$ has to be larger. We assume an EPR state with variance $V=3\hbar$ (about $10.8$dB) and transmission losses on Bob's mode of $0.05$, an error-correction efficiency of $0.95$ and excess noise $0.0005\hbar$. We see that  $\lambda^\epsilon_{\text{Maj}} $ does not provide a very tight bound on the entropy rate such that security can only be achieved under very restrictive assumptions on the quantum memory of a malicious Bob. 

\section{Security for realistic memory devices}
To obtain explicit parameters from~\eqref{eq:psucc}, the second ingredient is a strong converse for the memory in question.
Here, we consider the security conditions for OT~\eqref{eq:Cond} and BC~\eqref{eq:CondBC} for a specific class of memory channels, namely, bosonic thermal-loss channels with additive Gaussian noise. The question of the classical capacity of such channels has only recently been completely solved after settling the minimal output entropy conjecture~\cite{giovannetti2013A,giovannettiB}. Moreover, the exponential decay of successful transmittance above the classical capacity has been established for $\xi=1$  under a maximal photon constraint $N_{\max}$~\cite{wilde2014}, i.e., every code word has (up to negligible probability) maximal $N_{\max}$ photons. In Fig.~\ref{fig:SecurityGauss}, we show when equality in the condition for OT~\eqref{eq:Cond} is attained under Gaussian restriction of a malicious Bob's encoding map $\cE$ (i.e., $\lambda^\epsilon_{\gauss}$) depending on the transmissivity $\eta$ and the additive noise variance $N$ and different fractions of quantum memories $\nu$. Since there is only a week dependence for low thermal noise variance $V_\ther \leq 0.1$, we set it equal to $0$. We see that for $\nu\leq 1/8$ security can already been obtained for memory channels with small losses and additive noise. Note that since $\xi=1$ and that $\log V_{\epsilon_C}$ is approximately $\ell_\ec$ (see Fig.~\ref{fig:UR}), the same result holds for the BC protocol. The approximately same curves are obtained if one restricts the memory attacks to independent and identical encodings over $m\leq 10$ signals, i.e., exchanging $\lambda^\epsilon_{\gauss}$ by $\lambda^\epsilon_{\iid}$ in~\eqref{eq:Cond}. Moreover, security without any restriction on the encoding (i.e., $\lambda^\epsilon_{\maj}$) can be obtained if the fraction of the quantum memory is about $\nu\approx 5\cdot 10^{-3}$. We finally note that independent of the memory model, security for OT and BC can only be obtained with squeezed states and if the transmissivity between Alice and Bob is larger than $1/2$. 

\section{Conclusion}
We have presented a protocol for OT and BC using optical CV systems that provide security in the noisy-storage model. The protocol is practical and uses similar resources as CV QKD. As a key ingredient, we analyze and derive uncertainty relations for CV systems, that can be used along similar lines to analyze the security in the noisy-storage model for other two-party protocols such as secure password-based identification~\cite{Koenig2012,wcsl2010,schaffner2010}. We leave as open problem the task of finding optimal uncertainty relations without any further assumptions. It is possible that such relations can be obtained by linking security again to the quantum capacity of the storage device~\cite{Berta2012,Dupuis2015}, requiring however more sophisticated techniques. Such a result would also pose a challenge to find an explicit strong converse for the quantum capacity of bosonic channels. 

 {\bf Acknowledgements } We would like to thank Anthony Leverrier, Lo\"ick Magnin and Fr\'ed\'eric Grosshans for useful discussions about the continuous-variable world. FF is supported by the Japan Society for the Promotion of Science (JSPS) by KAKENHI grant No. 24-02793. CS is supported by a 7th framework EU SIQS and a NWO VIDI grant. SW is supported by STW Netherlands, as well as an NWO VIDI grant.


\appendix 

\section{CV uncertainty relations for the smooth min-entropy}   \label{app:CVUR}

\subsection{ Smooth min-entropy} \label{app:SmoothEntropies}

Let us consider a classical random variable $Z$ with values in a discrete but possibly infinite set $\cZ$ that is correlated to a quantum system $B$ modeled by a Hilbert space $\cH_B$. If $Z$ is distributed according to $p_z$, the situation can be conveniently be described by the state 
\begin{equation}
\rho_{ZB} = \sum_z p_z \ketbra z z \otimes \rho_B^z \, ,
\end{equation}
where $\rho_B^z$ is the state of system $B$ conditioned on $z\in\cZ$ and $\ket z$ an orthonormal basis. The conditional min-entropy of $Z$ given $B$ is then defined~\cite{koenig-2008} as minus the logarithm of the maximal success probability to correctly infer $Z$ given access to $B$, that is, 
\begin{equation}\label{minEnt}
H_{\min}(Z|B)_\rho = -\log \left( \sup_{\{E_z\}} \sum_z p_z \tr(E_z \rho_B^z) \right) \, ,
\end{equation}
with the supremum taken over all positive operator valued measures (POVM) $\{E_z\}$, i.e., $E_z\geq 0$ and $\sum_z E_z =\idty$. If system $B$ is a classical random variable $Y$ jointly distributed according to $p(z,y)$, the min-entropy is defined for the state $\rho_{ZY} = \sum_{z,y} p(z,y) \ketbra z z \otimes \ketbra y y $ with $\ket y$ an orthonormal basis. 

The smooth min-entropy is then defined as the maximization of the min-entropy over states that are $\epsilon$-close in the purified distance $\cP(\rho,\sigma) = \sqrt{1-F(\rho,\sigma)}$~\cite{Tomamichel09}. Here $F(\rho,\sigma)=(\tr \vert\sqrt{\rho}\sqrt{\sigma}\vert)^2 $ denotes the fidelity. In formulas, this means that 
\begin{align}\label{eq:SmoothMin}
H_{\min}^\epsilon(Z|B)_\rho = \sup_{\tilde\rho_{ZB}} H_{\min}(Z|B)_{\tilde\rho} \, , 
\end{align}
where $\cP(\rho_{ZB},\tilde\rho_{ZB}) \leq \epsilon$. 

The smooth min-entropy satisfies many entopy-like properties. For instance, we will frequently use the following chain rule
\begin{equation}\label{app:ChainRule}
H_{\min}^\epsilon(A|BZ) \geq H_{\min}^\epsilon(A|B) - \log |Z| \, , 
\end{equation}
where $A,B$ are arbitrary systems (quantum or classical) and $Z$ a classical system of dimension $|Z|$. 
For further properties of the smooth min-entropy, we refer the reader to~\cite{tomamichel:thesis} in the finite-dimensional and to~\cite{Furrer10,berta2011} in the infinite-dimensional case.

\paragraph*{Reduction to R{'e}nyi Entropy.} 
We are interested in lower bounding the smooth min-entropy 
\begin{equation}\label{app:URrate}
\frac 1 n H^\epsilon_{\min}(Z|\theta K) \geq \lambda^\epsilon
\end{equation}
 in which the side-information $K$ is classical as well. Note that we omit the subscript $A$ in Alice's measurement choice $\theta$. Because of the maximization in the definition of the smooth min-entropy~\eqref{eq:SmoothMin}, it is very difficult to bound the smooth min-entropy directly. Instead, it is easier to use that it can be related to the conditional $\alpha$-R{\'e}nyi entropies 
\begin{equation}
H_\alpha(A|B)_\rho = \frac 1 {1-\alpha} \log \tr [\rho_{AB}^\alpha (\id_A\otimes \rho_B)^{1-\alpha}] \, .
\end{equation} 
In particular, it holds for $\alpha\in(1,2]$ and any two random variables $X$ and $Y$ that~\cite{Tomamichel08}
\begin{equation}
H^\epsilon_{\min}(X|Y) \geq H_\alpha(X|Y) - \frac{1}{\alpha-1} \log\frac{2}{\epsilon^2} \,  ,
\end{equation} 
We present in Lemma~\ref{lem:MinToReny} a simple generalization of the result to unbounded classical variables $X$ and $Y$. Hence, if we find a bound on the R{\'e}nyi-entropies 
\begin{equation}\label{app:Renyirate}
\frac 1 n H_{\alpha}(Z|\theta K) \geq B^\alpha 
\end{equation}
we obtain a lower bound on the smooth min-entropy with 
\begin{equation} \label{app:URrateRenyi}
\lambda^\epsilon = \sup_{1< \alpha\leq 2} \left(  B^\alpha  - \frac{1}{n(\alpha-1)} \log\frac{2}{\epsilon^2} \right)  \, .
\end{equation}

Moreover, as shown in~\cite{Nelly12}, it suffices to find a bound $H_\alpha(Z|\theta) \geq B^\alpha$ for $n=1$ and without $K$, as such a bound implies that $H_\alpha(X|\theta K)\geq n \lambda$ for strings $X$ and $\Theta$ of any length $n$. This implication basically follows from the fact that the conditional R{\'e}nyi entropies can be expanded as 
\begin{equation}\label{app:RenyExp}
2^{(1-\alpha)H_\alpha(Z|Y) }= \sum_y 2^{(1-\alpha)H_\alpha(X|Y=y)} \, ,
\end{equation} 
where $ H_\alpha(X|Y=y)$ denotes the $\alpha$-R{\'e}nyi entropy of $X$ given $Y=y$.

\subsection{Uncertainty relation for R{\'e}nyi entropy} 

\subsubsection{Preliminaries} \label{sec:uncertaintypreliminaries}
According to the discussion in the previous section it is sufficient to consider the case of $n=1$ and trivial
$K$. That is, the system $A$ is a position-momentum system and
$\theta\in \{0,1\}$ is a uniformly distributed random variable indicating the measurement choice, i.e., $\theta = 0$ and $\theta =1$ corresponding to $X_{\delta_x}$ and $P_{\delta_p}$. For the sake of generality, we allow for different binning $\delta_x$ and $\delta_p$ for the coarse-grained $X$ and $P$ measurement. 

In the following we assume that $\{I_k\}_{k\in \mathbb N}$ and $\{J_k\}_{k\in \mathbb N}$ are two partitions of $\mathbb R$ into intervals of constant length $\delta x$ and $\delta p$, respectively. We denote the probability to measure $X$ ($P$) in interval $I_k$ ($J_l$) by $q_k$ ($p_l$). Using the definition of the R{\'e}nyi entropy together with the expansion in~\eqref{app:RenyExp}, we find that 
 \begin{align}\label{eq:ExpDiscUR}
2^{(1-\alpha)H_\alpha(X|\theta)} = \frac 12 (\sum_k q_k^\alpha + \sum_l p_l^\alpha )  \, . 
\end{align}
Given that $\alpha>1$, an upper bound on the above sums results in a lower bound on $H_\alpha(X|\theta)$.

For simplicity, let us first consider the continuous case for $\delta_x,\delta_p \rightarrow 0$. In this case, the sums in~\eqref{eq:ExpDiscUR} become integrals and we obtain
\begin{align}\label{eq:diffUR}
2^{(1-\alpha)h_\alpha(X|\theta)} = \frac 12 \left(\int q(x)^\alpha \dd x + \int p(x)^\alpha \dd x \right)  \, ,
\end{align}
where $q$ and $p$ are the probability distributions corresponding to $X$ and $P$, and $h_\alpha(X|\theta)$ denotes the differential conditional R{\'e}nyi entropy. While $q,p$ are integrable functions, that is not necessarily true for $q^\alpha,p^\alpha$. Hence, it is possible that \eqref{eq:diffUR} diverges when optimizing over all possible states. So, no uncertainty relation can be shown for continuous measurements. 

The right hand side of~\eqref{eq:diffUR} can be made arbitrarily large even for Gaussian states. For a normal distribution with standard deviation $\sigma$, the differential $\alpha$-R{\'e}nyi entropy is 
\begin{equation}\label{app:DiffRenyi}
h_\alpha(X) = \log \left[ \sqrt{2 \pi} \sigma \alpha^{ \frac{1}{2(\alpha -1)} }\right]  \, .
\end{equation} 
Hence, we obtain for a Gaussian state with standard deviation $\sigma_X$ and $\sigma_P$ in $X$ and $P$ that 
\begin{align}\label{eq:diffURGauss}
2^{(1-\alpha)h_\alpha(X|\theta)_\rho} = (2\pi)^{(1-\alpha)/2} \alpha^{-1/2} \left( \frac{1}{\sigma_Q^{\alpha -1}} + \frac{1}{\sigma_P^{\alpha -1}} \right)  \, . 
\end{align}
This quantity can be made arbitrarily large by taking a sufficiently small standard deviation for either $\sigma_Q$ or $\sigma_P$. However, the divergence is not a problem for coarse-grained outcomes as both of the sums in~\eqref{eq:ExpDiscUR} are upper bounded by $1$.

In order to bound~\eqref{eq:ExpDiscUR}, we have to use that not all possible distributions $q_k$ and $p_l$ are possible since they origin from measurements of the complementary observables $X$ and $P$ that are related via Fourier transform. This relation has been made rigorous by Landau and Pollak~\cite{Landau61} (see also~\cite[Section 2.9]{DymMcKean}), who show that the probability $q[I]$ to measure $X$ in an interval $I$ ($a=|I|$) and the probability $p[J]$ to measure $P$ in an interval $J$ ($b=|J|$) have to
satisfy the inequality 
\begin{equation}\label{eq:LandauPollak}
\cos^{-1}\sqrt{q[I]} + \cos^{-1}\sqrt{p[J]} \geq \cos^{-1}\sqrt{\gamma(a,b)} 
\end{equation}
where 
\begin{equation}
\gamma(a,b) := \frac{ab}{2\pi\hbar} S_0^{(1)}\left(1,\frac{ab}{4\hbar}\right)^2 
\end{equation}
with $S_0^{(1)}$ the 0th radial prolate spheroidal wave function of the first kind. For $ab$ sufficiently small $\gamma(a,b)\approx ab/(2\pi\hbar)$. 

The condition~\eqref{eq:LandauPollak} can be reformulated in the following way~\cite{DymMcKean}: 
\begin{itemize}
	\item[i)] If $0\leq  q[I] \leq \gamma(a,b)$, then all values for $p[J]$ are possible, and
	\item[ii)] if $\gamma(a,b)\leq q[I]$, then $p[J]\leq g(q[I],a,b)$ for 
	\begin{equation}\label{eq:gfunction} 
g(q,a,b):= \left( \sqrt{q \gamma(a,b)} + \sqrt { (1-q) (1-\gamma(a,b) )}\right)^2 \, . 
	\end{equation}
\end{itemize}

This reformulation yields an infinite number of constraints for the probability distributions. Let us assume that $\{q_k\}_k$ and $\{p_l\}_l$ are decreasingly ordered. Then for all $M,N \in \mathbb{N}$, $\{q_k\}_k$ 
and $\{p_l\}_l$ 
have to satisfy 
the constraints 
\begin{align}\label{eq:ConstrSet3}
 	\sum_{j=1}^N  p_{k_j} \leq  g\left(\sum_{i=1}^M q_i , M\delta_x, N\delta_p \right) \, . 
\end{align}
However, it is non-trivial to turn these constraints into an explicit and tight upper bound for~\eqref{eq:ExpDiscUR}. In the following we discuss a particular way that connects the above constraints with a majorization approach, which leads to $\lambda_{\maj}^\epsilon$.

 \subsubsection{Majorization uncertainty relation} \label{app:MajUR}
 
This bound follows from an idea in~\cite{rudnicki2015}. Let us denote by $r$ the decreasingly ordered sequence of both probabilities $\{q_k\}$ and $\{p_l\}$. Then, the expression~\eqref{eq:ExpDiscUR} can be written as 
\begin{equation}
2^{(1-\alpha)H_\alpha(X|\theta)} = \frac 12 \sum_j r_j^\alpha \, . 
\end{equation}
Since the function $r\mapsto \sum_j r_j^\alpha$ is Schur convex, we get an upper bound on~\eqref{eq:ExpDiscUR} if we find a sequence $w$ which majorizes any physically possible sequence $r$. Such a $w$ can be constructed in the following way~\cite{rudnicki2015}.

First note that according to property ii) , $ q[I]+p[J] \leq q[I] + g(q[I],a,b)$, which optimized over all $0\leq q[I] \leq 1$ is equal to $1+\sqrt{\gamma(a,b)}$. Hence, we obtain the constraint
\begin{equation}
q[I]+p[J] \leq 1+ \sqrt{\gamma(a,b)} \, . 
\end{equation}
This constraint implies further that
\begin{equation}
\sum_{j=1}^n r_j \leq  1 + F_n(\delta q,\delta p) \, , 
\end{equation} 
where 
\begin{equation}
 F_n(\delta_x,\delta_p) = \max_{1\leq k\leq n} \sqrt{c\left(k\delta_x,(n-k) \delta_p\right)} \, .
\end{equation}
Note that in the case $\delta_x = \delta_p$ the maximum is attained for $k=\lfloor \frac n 2 \rfloor$.

We can construct a majorizing sequence $w$ by setting recursively
\begin{align}\label{eq:MajSequ}
w_1 = 1 , \  \text{and} \ w_k = F_k - w_{k-1} \ \text{for} \ k\geq 2 \, . 
\end{align}
The obtained bound on the R{\'e}nyi-entropy is 
\begin{equation}\label{eq:MajUR}
H_\alpha(X|\theta) \geq B^\alpha_{\text{Maj}} \, 
\end{equation} 
where 
\begin{equation}
B^\alpha_{\text{Maj}} = \frac{1}{1-\alpha} \log \left( \frac 12 \sum_k w_k^\alpha\right)  \, . 
\end{equation} 
By using~\eqref{app:URrateRenyi}, we obtain the following bound on the smooth min-entropy 
\begin{equation}
 \lambda_\maj^\epsilon := \sup_{1< \alpha\leq 2} \left( B^\alpha_{\text{Maj}} - \frac{1}{n(\alpha-1)} \log\frac{2}{\epsilon^2} \right)  \, .
\end{equation}

Since $B^\alpha_{\text{Maj}}$ depends on the recursively defined sequence $w$ in~\eqref{eq:MajSequ}, there is no closed form and it can only be computed numerically. 
However, one can easily check that $B^\alpha$ is monotonically increasing in $\alpha$, which simplifies the optimization over $\alpha$ required for the calculation of $\lambda_{\maj}^\epsilon$.

\subsubsection{Uncertainty relation for Gaussian states} \label{app:GaussianUR}

In order to obtain a tighter bound, we consider an uncertainty relation that holds for arbitrary Gaussian states or any mixture thereof. In fact, for our application it is important to allow arbitrary and even continuous mixtures of Gaussian states. The reason
is that a coarse-grained quadrature measurement with finite binning on one mode of a
multi-mode Gaussian state results in a continuous mixture of
Gaussian states in the remaining modes (and not in a Gaussian state
itself). Since conditioning on part of the measurement outcomes on Alice's mode is needed to generalize the uncertainty relation from
$n=1$ to $n>1$~\cite{Nelly12}, this level of generality is crucial. 

\begin{thm}\label{thm:BoundGauss}
Let $\alpha \in (1,2]$. For an arbitrary and possibly continuous convex combination of Gaussian states and coarse-grained measurements $X_{\delta_x}$ and $P_{\delta_p}$ holds that 
\begin{equation}\label{eq,thm:BoundGauss}
H_\alpha(Z|\theta) \geq  B^\alpha_{\text{Gauss}}(\delta)  \, ,
\end{equation}
where
\begin{equation}
B^\alpha_{\text{Gauss}}(\delta_x,\delta_p) = \frac{1}{1-\alpha} \log\frac
12\left( 1 + \frac{1}{\alpha} \left( \frac{\delta_x\delta_x}{\pi\hbar}\right)^{(\alpha-1)} \right)
\end{equation}
\end{thm}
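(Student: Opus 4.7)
The plan is (i) to use convexity to reduce the statement to a single Gaussian state; (ii) to pass from the discrete probability sums to continuous R\'enyi integrals through a bin-wise Jensen inequality; and (iii) to combine the resulting Gaussian estimates with the Heisenberg relation $\sigma_x\sigma_p\ge\hbar/2$ via a short elementary inequality.

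Rewriting by \eqref{eq:ExpDiscUR}, the target bound is equivalent to
\begin{equation*}
S(\rho):=\sum_k q_k^\alpha+\sum_l p_l^\alpha\;\le\;1+\frac{1}{\alpha}\Bigl(\frac{\delta_x\delta_p}{\pi\hbar}\Bigr)^{\alpha-1}.
\end{equation*}
Since $q_k=\tr(\rho E_k^X)$ and $p_l=\tr(\rho E_l^P)$ are affine in $\rho$ and $t\mapsto t^\alpha$ is convex on $[0,1]$ for $\alpha>1$, the functional $S$ is convex in $\rho$. Hence it suffices to prove the bound for every (possibly mixed) Gaussian state and then invoke Jensen to extend to arbitrary (continuous) convex combinations $\int\lambda(y)\rho_y\,dy$.

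For such a single Gaussian state the $X$-marginal $f$ is itself Gaussian with variance $\sigma_x^2$, so a second Jensen step per bin gives $q_k^\alpha\le \delta_x^{\alpha-1}\int_{I_k} f^\alpha$. Summing and evaluating the standard Gaussian R\'enyi integral yields
\begin{equation*}
\sum_k q_k^\alpha\le U:=\frac{1}{\sqrt{\alpha}}\Bigl(\frac{\delta_x^2}{2\pi\sigma_x^2}\Bigr)^{(\alpha-1)/2},
\end{equation*}
and analogously $\sum_l p_l^\alpha\le V$ with $\sigma_p,\delta_p$ in place of $\sigma_x,\delta_x$. I would also retain the trivial $\sum_k q_k^\alpha\le 1$ from $q_k\in[0,1]$ and $\alpha>1$, so that $S(\rho)\le \min(U,1)+\min(V,1)$.

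To close, the Heisenberg inequality in its $\det\gamma\ge(\hbar/2)^2$ form gives $\sigma_x\sigma_p\ge\hbar/2$ for any Gaussian state, hence $UV\le \alpha^{-1}(\delta_x\delta_p/(\pi\hbar))^{\alpha-1}$. The proof is then finished by the elementary inequality
\begin{equation*}
\min(U,1)+\min(V,1)\;\le\;1+UV,\qquad U,V\ge 0,
\end{equation*}
which follows from $(1-\min(U,1))(1-\min(V,1))\ge 0$ together with $\min(U,1)\min(V,1)\le UV$; applying $\tfrac{1}{1-\alpha}\log(\cdot/2)$, with the inequality flipping because $\alpha>1$, reproduces $B^\alpha_{\textrm{Gauss}}(\delta_x,\delta_p)$ exactly. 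The step I would be most careful about is the convex-combination reduction, since the intended application is to \emph{continuous} mixtures of Gaussians arising from conditioning on part of Alice's outcomes in a multi-mode Gaussian state: one needs to verify that the mixing measure is regular enough for Jensen to pass to integral form, but this is routine once phrased correctly. The rest is Gaussian calculus plus a one-line min--max inequality.
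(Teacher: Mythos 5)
Your proof is correct and reproduces the paper's bound, but you close the argument by a genuinely different, and cleaner, route. The paper and you agree on the first two moves: apply the bin-wise Jensen inequality $q_k^\alpha\le\delta_x^{\alpha-1}\int_{I_k}q(x)^\alpha\,dx$, evaluate the Gaussian R\'enyi integral, and keep the trivial cap $\sum_k q_k^\alpha\le 1$. Where the paper and you part ways is the final optimization. The paper sets $\tilde\sigma_P=\tilde\hbar/(2\tilde\sigma_X)$ with $\tilde\sigma_X\ge\tilde\sigma_P$ by symmetry, performs a case distinction on whether $\tilde\sigma_X$ lies below or above the threshold where $g(\tilde\sigma_P)=1$, shows $g(\tilde\sigma_X)+g(\tilde\sigma_P)$ is increasing and $1+g(\tilde\sigma_X)$ is decreasing in $\tilde\sigma_X$, and reads off the maximum at the crossover point. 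You instead note the one-line algebraic fact $\min(U,1)+\min(V,1)\le 1+UV$ (from $(1-\min(U,1))(1-\min(V,1))\ge 0$ and $\min(U,1)\min(V,1)\le UV$), combine it with $UV\le\alpha^{-1}(\delta_x\delta_p/(\pi\hbar))^{\alpha-1}$ coming from $\sigma_x\sigma_p\ge\hbar/2$, and you are done without any monotonicity or case analysis. Both yield the identical constant; the paper's route additionally exhibits the extremizing variance and hence shows the bound is saturated within the family of Gaussian states, while yours is shorter and avoids the threshold bookkeeping.

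Your treatment of the convex-mixture reduction is also fine and is logically equivalent to the paper's: you observe at the outset that $S(\rho)=\sum_k(\tr\rho E_k^X)^\alpha+\sum_l(\tr\rho E_l^P)^\alpha$ is convex in $\rho$ (affine composed with convex) and invoke Jensen for the mixing measure, whereas the paper places the same argument at the end, phrased as Fubini plus convexity of $x\mapsto x^\alpha$. The interchange of the infinite sum over bins with the mixture integral is the only point that needs care, as you flag; both presentations handle it by the same observation that all quantities are nonnegative and the relevant sums are bounded, so Tonelli/Fubini applies.
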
 

\begin{proof}
We first assume that the state is a Gaussian state. Let us recall that we have to upper bound the sums in~\eqref{eq:ExpDiscUR}. Denoting the probability density function of the continuous $X$ measurement by $q(x)$, a simple application of Jensen's inequality results in 
\begin{align}\label{eq:Jensen}
q_k^\alpha = \left(\int_{I_k} q(x) dx \right)^\alpha \leq \delta_x^{\alpha-1} \int_{I_k} q(x)^\alpha dx \, .
\end{align}
Hence, using the formula for the R{\'e}nyi entropy of a Gaussian state~\eqref{app:DiffRenyi}, we find for a state with standard deviation $\sigma_X$ for the $X$ measurement that 
\begin{align}
\sum_k q_k^\alpha \leq \delta^{\alpha -1} \int q(x)^\alpha dx = g(\tilde\sigma_X) \, , 
\end{align}
where $\tilde\sigma_X=\sigma_X/\delta_X$ is the relative standard deviation and 
\begin{align}
g(x)  = \frac{1}{\sqrt{\alpha} (\sqrt{2\pi}x)^{\alpha-1} } \, . 
\end{align}
Note that the bound $g(\tilde\sigma_X)$ becomes very bad if $\tilde\sigma_X$ is very small. In particular, it can exceed the trivial upper bound on $\sum_k q_k^\alpha$ given by $1$. We avoid that problem by simply bounding
\begin{align}
\sum_k q_k^\alpha \leq \min \{  g(\tilde\sigma_X) , 1\} \, . 
\end{align}

Let us use that the standard deviations of the $X$ and $P$ distribution satisfy $\sigma_X\sigma_P \geq \hbar/2$, which translates into $\tilde\sigma_X \tilde\sigma_P\geq \hbar/(2\delta_x\delta_p)$ for the relative standard deviations. For the following it is convenient to define $\tilde\hbar := \hbar/(\delta_x\delta_p)$. Given that we want to maximize the quantity over all Gaussian states, we can without loss of generality assume that $\tilde\sigma_X\geq \tilde\hbar/\sqrt{2}$ and that $\tilde\sigma_P = \tilde\hbar/(2\tilde\sigma_X)\leq \tilde\sigma_X$. A straightforward calculation then results in 
\begin{align*}
 &\sum_k q_k^\alpha + \sum_k p_k ^\alpha \leq \min \{g(\tilde\sigma_X), 1\}  + \min \{ g(\tilde\sigma_P) , 1\} \\
 &\qquad \leq  \left\{ \begin{array}{ll} 
 g(\tilde\sigma_X) +g(\tilde\sigma_P)  & \mbox{if }\ \tilde \sigma_X\leq \tilde\hbar \sqrt{\frac{\pi}{2}} \alpha^{\frac 1{2(\alpha -1)}} \, ,
 \\ 
 1+ g(\tilde\sigma_X) &\text{otherwise.} 
 \end{array} \right.
\end{align*}
One can check easily that $g(\tilde\sigma_X) +g(\tilde\hbar/(2\tilde\sigma_X))$ is monotonically increasing in $\tilde \sigma_X$, and $1+g(\tilde\sigma_X)$ is monotonically decreasing in $\tilde\sigma_X$. Hence, the maximum of the right hand side is attained exactly for $  \tilde \sigma_X = \tilde\hbar \sqrt{\frac{\pi}{2}} \alpha^{\frac 1{2(\alpha -1)}}$. Plugging this value in, we obtain that 
\begin{align}
 \sum_k q_k^\alpha + \sum_k p_k ^\alpha   \leq  1 + \frac{1}{\alpha} \left(\frac{1}{\pi} \right)^{{\alpha-1}} \left( \frac{(\delta_x \delta_p)}{\hbar}\right)^{(\alpha-1)}\,  , 
\end{align}
which finishes the proof for Gaussian states. 

Let us now assume that the state is given by $\rho = \int_Y d\mu(y) p(y) \rho^y$ with $(Y,d\mu)$ a sigma-finite measure space, $p$ a probability distribution over $Y$ and $\rho^y$ a Gaussian state for any $y$. It then follows that the $X$ measurement maps $\rho$ to an element of $L^1(\mathbb R)$ that can be written as $\rho_Q = \int_Y d\mu(y) p(y) \rho_Q^y$ with $\rho_Q^y$ the Gaussian distribution of the position of $\rho^y$. The same holds for the $P$ measurement. It thus follows that 
\begin{align}
& \sum_k \left(\int_{I_k}d x \int_Y d \mu(y) p(y) \rho_Q^y(x) \right)^\alpha \\ 
 =&  \sum_k \left( \int_Y d \mu(y) p(y)\int_{I_k}d x \rho_Q^y(x) \right)^\alpha \\ 
 \leq&   \sum_k  \int_Y d \mu(y) p(y)\left(\int_{I_k}d x \rho_Q^y(x) \right)^\alpha \\   
 = & \int_Y d \mu(y) p(y)  \sum_k  \left(\int_{I_k}d x \rho_Q^y(x) \right)^\alpha \, ,
\end{align}
where the two equalities follow from Fubini's theorem (since all integrals and sums are bounded) and the inequality from the convexity of the function $x\mapsto x^\alpha$ on the non-negative reals (for $\alpha \in (1,2]$). Thus, by the linearity of the integral we obtain the desired result. 
\end{proof}

Similar to the majorization uncertainty relation, we get a bound on the smooth min-entropy via~\eqref{app:URrateRenyi}
\begin{equation}\label{eq:URrateMaj}
\frac 1n H_{\min}^{\epsilon}(Z|\theta K) \geq \lambda_\gauss^\epsilon(\delta_x,\delta_p,n)  \, ,
\end{equation}
with 
\begin{equation}
 \lambda_\gauss^\epsilon(\delta_x,\delta_p,n)  := \sup_{\alpha} \left( B^\alpha_{\text{Gauss}}(\delta_x,\delta_p) - \frac{1}{n(\alpha-1)} \log\frac{2}{\epsilon^2} \right)  \, .
\end{equation}

Let us show that the performance of the inequality in the asymptotic limit is optimal in the sense that it converges to the bound obtained for the Shannon entropy. In order to do so, we consider the limit 
\begin{align} 
\lim_{\epsilon\rightarrow 0}\lim_{n\rightarrow \infty} \lambda_\gauss^\epsilon(\delta_x,\delta_p,n) = \lim_{\alpha \rightarrow 1}   B^\alpha_{\text{Gauss}}(\delta_x,\delta_p) \, ,
\end{align}  
where we used that $ B^\alpha_{\text{Gauss}}(\delta_x,\delta_p)$ is independent of $n$ and monotonically decreasing in $\alpha$. A straightforward calculation yields that 
\begin{align} \label{app:AsymptoticURrate}
\lim_{\alpha \rightarrow 1}   B^\alpha_{\text{Gauss}}(\delta_x,\delta_p) = - \frac 12 \log c(\delta_x,\delta_p) \, ,
\end{align}  
where $c(\delta_x,\delta_p):= {\delta_x\delta_p}/({\pi e \hbar}) $ is the uncertainty bound for the Shannon entropy, i.e., $H(X|\theta) \geq -(1/2)\log c(\delta_x,\delta_p)$~\cite{Birula84}. 

Intuitively, the reason for this convergence is that we use Jensen inequality~\eqref{eq:Jensen} to bound the discrete entropy to the differential entropy, i.e., $H_\alpha(X_\delta) \geq h_\alpha(X)- \log \delta$, together with the fact that the uncertainty relation for differential Shannon entropies~\cite{Birula75,Beckner75} becomes an equality for any pure Gaussian state. Hence, for the limit case $\alpha\rightarrow 1$, we simply obtain 
\begin{align}
2 H(X_\delta|\theta) & =  H(Q_\delta) + H(P_\delta) \\
& \geq h(Q) + h(P) -\log \delta^2 \\
& \geq -\log \pi e \hbar - \log\delta^2 \\
& = -\log c(\delta) \, ,
\end{align}
where we used $\delta_Q=\delta_P=\delta$ for simplicity.

\subsection{ Uncertainty relation under iid assumption} \label{sec:URIID}

The following bound is based on the property that the smooth min-entropy of $n$ independent  and identically distributed (iid) random variables converges to the Shannon entropy in the asymptotic limit~\cite{renner05}. We use a result derived in~\cite{Tomamichel08}, which for iid random variables $X^n$ and $Y^n$ reads as
\begin{equation}\label{app:AEP}
\frac 1n H_{\min}^{\epsilon}(X^n|Y^n) \geq H(X|Y) - \frac{4}{\sqrt{n}} \log(\eta(X)_\rho)^2 \sqrt{\log\frac 2 {\epsilon^2}} \, 
\end{equation}
where $\eta(X) =  2+ 2^{H_{1/2}(X)_\rho/2}$. The above result holds also for random variables over infinite alphabets if $H(X)<\infty$~\cite{Furrer10}. The crucial point for the application of the inequality in~\eqref{app:AEP} is that the correction term beside the Shannon entropy is independent of the conditioning variable $Y$. 

Let us assume that Bob produces an ensemble of $n$-mode states according to an independent and identical distribution (iid) over only $m$ modes such that the state on $A$ and $K$ has the form $\rho_{A^nK^n}=(\sigma_{A^mK^m})^{\otimes n/m }$, where we assume that $n/m\in\mathbb N$. Then, also the random variable $Z^n$ obtained by measuring randomly either $X_{\delta_x}$ or $P_{\delta_p}$ has the same structure. Applying the inequality~\eqref{app:AEP}, we obtain that 
\begin{align}
\frac 1n H_{\min}^{\epsilon}(Z^n|\theta^n K^n) & \geq \frac 1m H(Z^m|\theta^m K^m)\\
 & \quad  - 4\sqrt{\frac{m}{n}} \log(\eta(Z^m))^2 \sqrt{\log\frac 2 {\epsilon^2}} \, .
\end{align}
At this point, we can simply use the uncertainty relation for the Shannon entropy~\cite{Birula84} 
\begin{align}
H(Z^m|K^m\Theta =\theta ) + H(Z^m|K\Theta=\bar\theta) \geq - m\log c(\delta_x,\delta_p) \, ,
\end{align}
where $\bar\theta=(1-\theta_i)_{i=1}^m$ denotes the complementary basis choice of $\theta=(\theta_i)_{i=1}^m$. This inequality implies that 
\begin{align*}
& H(Z^m|\Theta K) \\
 = & \frac{1}{2^m} \sum_{\theta} \frac 12\left( H(Z^m|K\Theta =\theta ) + H(Z^m|K\Theta=\bar\theta) \right)  \\
  \geq & - \frac{m}{2}\log c(\delta_x.\delta_p) \, .
\end{align*} 
Hence, we obtain the uncertainty relation 
\begin{equation}
\frac 1n H_{\min}^{\epsilon}(Z^n|\theta^n K^n) \geq \lambda^\epsilon_\iid (\delta,m,n)\ ,
\end{equation}
where 
\begin{align}\label{URrateIID}
\lambda^\epsilon_\iid (\delta,m,n) &  = -\frac 12\log c(\delta_x,\delta_p)  \\
& \quad - 4\sqrt{\frac{m}{n}} \log(\eta(Z^m))^2 \sqrt{\log\frac 2 {\epsilon^2}} \, .
\end{align}
Note that even though the right-hand side still depends on the distribution of $Z^m$, it is not conditioned on $K$ and Alice can estimate it. Particularly, in the application to oblivious transfer or bit commitment, we can assume that Alice distributes the average ensemble state, and thus, knows the distribution over $Z$ by herself. Note further that $\log(\eta(Z^m)) =\cO(m)$ such that 
\begin{equation}\label{app:IIDUR} 
\lambda^\epsilon_\iid (\delta,m,n) = -\frac 12\log c(\delta_x,\delta_p)  - \cO(m^2 \sqrt{\frac{m}{n}} ) \, .
\end{equation}
Similarly to when we restricted to Gaussian states, we find that in the asymptotic limit, the bound converges to $-1/2\log c(\delta_x,\delta_p)$.


\section{Oblivious Transfer}  \label{app:OT}

\subsection{Composable security definitions} \label{app:OTSecDef}

In the following, we denote random variables by capital letters, e.g, $S_0,S_1$ for Alice's output.  The uniform distribution of a random variable $X$ is denoted by $\tau_X$ and the classically maximally correlated state of two random variables $X$ and $Y$ with same range by $\Omega_{XY}$, i.e., $\Omega_X=\tau_X$, $\Omega_Y=\tau_Y$, and $\Omega_{X|Y=y}=\delta_{x,y}$. Moreover, we set $[n]=\{1,2,...,n\}$ and $\bar x= 1-x $ for any binary variable $x$.

We use the composable security definitions from~\cite{Koenig2012}. 
\begin{de} \label{def:OT}
A protocol between two parties Alice and Bob that takes input $T$ in $\{0,1\}$ from Bob and outputs on Alice's side two bit strings $S_0,S_1$ in $\{0,1\}^\ell$ and on Bob's side $\tilde S$ in $\{0,1\}^\ell$ is called an $(\epsilon_C,\epsilon_A,\epsilon_B)$-secure (sender-randomized) $\textrm{OT}^\ell$ protocol if the following conditions hold: 
\begin{itemize}
	\item  The protocol is $\epsilon_C$-correct. That is, if both parties follow the protocol, then the output of the protocol $\rho_{S_0S_1 \tilde{S} T}$ satisfies for $t \in \{0,1\}$ 
	\begin{equation}\label{eq:Corr}
	\Vert \rho_{S_0S_1\tilde S|T=t} - \tau_{S_{\bar t}} \otimes \Omega_{S_t\tilde S} \Vert_1 \leq \epsilon_C \, . 
\end{equation}	 
	\item The protocol is $\epsilon_A$-secure for Alice. That is, if Alice follows the protocol, then for any strategy of Bob with output $\rho_{S_0S_1B'}$, where $B'$ denotes Bob's register at the end of the protocol, there exists a random variable $D$ with range $\{0,1\}$ such that  
	\begin{equation} \label{eq:SecA}
   \Vert \rho_{S_{\bar{D}}S_D DB'} - \tau_{S_{\bar{D}}} \otimes \rho_{S_D DB'} \Vert_1 \leq \epsilon_A \, .
	\end{equation}
	\item The protocol is $\epsilon_B$-secure for Bob. That is, if Bob follows the protocol, then for any strategy of Alice with output $A'$, resulting in the joint output state $\rho_{A'\tilde{S}T}$, there exist random variables $S_0',S_1'$ such that $\rho_{A'S_0'S_1'\tilde S T}$ satisfies $\pr{\tilde S\neq S_T'} \leq \epsilon_B$ and  
		\begin{equation} \label{eq:SecB2}
   \Vert\rho_{A'S_0'S_1'|T=0} - \rho_{A'S_0'S_1'|T=1}  \Vert_1 \leq \epsilon_B \, .
	\end{equation}
\end{itemize}
\end{de}

\subsection{Security analysis for oblivious transfer} \label{app:OTanalysis}

The conditions for correctness are that $\tilde S $ is with high probability equal to $S_t$ and that $S_0S_1$ are uniformly distributed. The first condition relies on the error-correction protocol. We assume in the following that the procedure manages to correct the error with probability $\epsilon_{\ec}$. The second condition follows from the security definition for Alice~\eqref{eq:SecA} by using the monotonicity of the trace norm. Hence, if security for Alice holds with $\epsilon_A$, correctness holds with at least $\epsilon_C = \epsilon_\ec +2\epsilon_A$.

Security for Bob holds since he only sends the sets $I_0,I_1$ during the entire protocol, which due to the random choice of the measurement by Bob are independent of $t$.  This has been made precise in~\cite{damgaard2007}. 

Let us consider security for Alice. Recall that Bob's memory attack is given by an encoding $\cE$ mapping the $n$ modes to systems $Q_\inn$ and $K$, where $Q_\inn$ is the input of his memory channel $\cM=\cF^{\otimes \nu n}$  and $K$ some additional classical information. Hence, after completing the entire protocol, Bob's system is given by $B'=Q_\out B_\cl$, where $Q_\out = \cM(Q_\inn)$ and $B_{\cl}=\theta_A KWHC$ all his classical information. Here,  $W=(W_0,W_1)$ denotes the error-correction information, and $H=(F_0,F_1)$ the $2$-universal 
hash functions used for privacy amplification.

According to~\eqref{eq:SecA}, we have to show that there exists a random variable $D$ such that 
\begin{equation}\label{eq:SecA2}
   \Vert \rho_{S_{D}S_{\bar D} D Q_\out B_{\cl}} - \tau_{S_D} \otimes \rho_{S_{\bar{D}} D Q_\out B_{\cl}} \Vert_1 \leq \epsilon_A. 
\end{equation}
The privacy amplification lemma~\cite{renner05,Tomamichel10} against infinite-dimensional quantum adversaries~\cite{berta2011} tells us that~\eqref{eq:SecA2} is satisfied for
\begin{equation}\label{eq:hashLength1}
\ell \geq H^{\epsilon_1}_{\min}(Z_D|S_{\bar D} D Q_\out B_{\cl}) - 2 \log\frac{1}{\epsilon_A-4\epsilon_1} \, ,
\end{equation}  
with $\epsilon_1\geq 0$ arbitrary such that $\epsilon_A\geq 4 \epsilon_1$. 

Hence, it remains to find a tight lower bound on the smooth min-entropy $H^{\epsilon_1}_{\min}(Z_D|S_{\bar D} D Q_\out B_{\cl})$. For this purpose we follow similar arguments as in~\cite{Koenig2012,schaffner2010}. Therein, a central ingredient is a bound of the smooth min-entropy $H_{\min}(U|\cM(Q_\inn) V)$ with $U,V$ classical and $\cM$ a quantum channel by the success probability to send classical information at a rate $R$ through $\cM$ 
\begin{equation}
\cP_{\suc}^{\cM}(nR):= \sup_{\rho_k , D_k}  \frac{1}{2^{nR}} \sum_{k} \tr(D_k\cM(\rho_k)) \, ,
\end{equation}
where the supremum runs over ensembles of code states $(\rho_k)_{k=1}^{nR}$ and POVM's $(D_k)_{k=1}^{nR}$ acting as a decoder. It has been shown in~\cite{Koenig2012} that (see also Lemma~\ref{lem:QMtoCap})
\begin{equation}\label{app:MinBoundPsucc}
H^{\epsilon +\epsilon'}_{\min}(U|\cM(Q_{\inn})V) \geq -\log \cP^\cM_{\suc}\left(\lfloor H^{\epsilon}_{\min}(U|V)_\rho - \log \frac 1{\epsilon'^2} \rfloor\right) \, .
\end{equation} 

Applying the chain rule~\eqref{app:ChainRule}, we first bound
\begin{align*} 
 H^{\epsilon_1}_{\min}(Z_D| S_{\bar{D}} D Q_\out B_{\cl})  \geq H^{\epsilon_1}_{\min}(Z_D| D Q_\out B_{\cl}) - \ell  \, ,
\end{align*}
where we used that $\log |S_{\bar D}| = \ell$. The smooth min-entropy $H^{\epsilon_1}_{\min}(Z_D| D Q_\out B_{\cl})$ on the right hand side can then be lower bounded by means of inequality~\eqref{eq:hashLength1} by  
\begin{align*} 
 -\log\left( \cP^\cM_{\suc}\left(\lfloor H^{\epsilon_2}_{\min}(Z_D| D  B_{\cl})  - \log\frac 1{(\epsilon_1-\epsilon_2)^2} \rfloor\right)\right) \, . 
\end{align*}
Plugging the bounds in~\eqref{eq:hashLength1} and solving for $\ell$, one easily finds that~\eqref{eq:SecA2} is satisfied if we choose $\ell$ smaller or equal to 
\begin{align*}
& -\frac 12  \log \left( \cP^\cM_{\suc}\left(\lfloor H^{\epsilon_2}_{\min}(Z_D| D  B_{\cl})  - \log\frac 1{(\epsilon_1-\epsilon_2)^2} \rfloor\right)\right)   \\
& -  \log\frac{1}{\epsilon_A-4\epsilon_1}  \, .
\end{align*}

The goal of the next part is to lower bound the smooth min-entropy $ H^{\epsilon_2}_{\min}(Z_D| D  B_{\cl})$. For that lower bound, we use the min-entropy splitting theorem~\cite{damgaard2007}, (see also Lemma~\ref{lem:MinSplit}), which tells us that there exists a random variable $D$ such that 
\begin{equation}
H_{\min}^{\epsilon} (Z_D|DB_{\cl}) \geq \frac 12 H_{\min}^{\epsilon} (Z_0Z_1|B_{\cl}) - 1   \, .
\end{equation} 
Given that Bob's classical register $B_{\cl}$ is given by $\theta_A KWHC$, we finally get via~\eqref{app:ChainRule} that 
\begin{align*}
H_{\min}^{\epsilon} (Z_0Z_1|B_{\cl}) & \geq H_{\min}^{\epsilon} (Z_0Z_1|\theta_AK)-\log|W|-\log|C|  \\ 
& \geq  H_{\min}^{\epsilon} (Z_0Z_1|\theta_AK)- \ell_{\ec} - 2\log\frac 1{\epsilon_C} \, ,   
\end{align*}
where $\ell_\ec=\log|W|$, and we used that the hash functions are drawn independently at random. 

Concluding the above discussion, we arrive at the following bound on the length of the string that enables security for Alice. 
\begin{thm}\label{thm:secA}
Let us assume that $H^\epsilon_{\min}(Z|K \theta) \geq  n \lambda_\epsilon(n)$ and Bob's memory channel is given by $\cM$. Then the protocol for OT consisting of steps (Q1)-(Q3) followed by (OT1)-(OT3) is $\epsilon_A$-secure for Alice if 
\begin{align} \label{thm:eq,SecA}
\ell =  -\frac 12  \log  \cP^\cM_{\suc}\left( \lfloor  n r_\ot   \rfloor\right)  -  \log\frac{1}{\epsilon_A-4\epsilon_1} \, ,
\end{align} 
where 
\begin{equation}
r_\ot  :=   \lambda_{\epsilon_2}(n) - r_{\ec} - \frac 1 n \left( 2 \log\frac 1{\epsilon_C} - \log\frac 1{(\epsilon_1-\epsilon_2)^2} -1\right) \, 
\end{equation}
and $\epsilon_1,\epsilon_2\geq 0$ arbitrary such that $\epsilon_A> 4\epsilon_1> 4\epsilon_2$.  
\end{thm}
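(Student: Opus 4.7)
The plan is to show that the only new content of the theorem is assembling tools that are already in place: privacy amplification against quantum adversaries, the Koenig--Wehner--Wullschleger reduction of smooth min-entropy conditioned on channel output to the classical capacity/success-probability bound, the min-entropy splitting theorem, and the uncertainty-relation bound $\lambda_\epsilon(n)$ given as the hypothesis. Security for Bob and correctness are not part of this statement; they were discussed separately.

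First I would unpack Bob's final register as $B' = Q_\out B_\cl$ with $Q_\out = \cM(Q_\inn)$ and $B_\cl = \theta_A K W H C$ (the classical transcript), and invoke the infinite-dimensional privacy amplification lemma of~\cite{berta2011} to reduce the security condition~\eqref{eq:SecA2} to a lower bound on $H^{\epsilon_1}_{\min}(Z_D \mid S_{\bar D} D Q_\out B_\cl)$, for an appropriate binary random variable $D$ whose existence is to be established. Then I would peel off $S_{\bar D}$ using the chain rule~\eqref{app:ChainRule}, losing only $\log |S_{\bar D}| = \ell$, which is exactly the $-\tfrac12$ factor eventually appearing in~\eqref{thm:eq,SecA} after solving for $\ell$.

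Next I would use inequality~\eqref{app:MinBoundPsucc} to absorb the quantum side-information $Q_\out$ into the success probability $\cP^\cM_\suc$ of the memory channel, at the cost of a slack $\log 1/(\epsilon_1-\epsilon_2)^2$ and smoothing parameter $\epsilon_2$. This converts the bound into one depending only on $H^{\epsilon_2}_{\min}(Z_D \mid D B_\cl)$ with $B_\cl$ purely classical. At that point the min-entropy splitting lemma of~\cite{wulli,damgaard2007} (Lemma~\ref{lem:MinSplit}) produces the required $D \in \{0,1\}$ with $H^{\epsilon_2}_{\min}(Z_D \mid D B_\cl) \geq \tfrac12 H^{\epsilon_2}_{\min}(Z_0 Z_1 \mid B_\cl) - 1$. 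Since $Z_0 Z_1$ is a relabeling of $Z$, a final chain-rule application removes the error-correction and hashing transcripts, giving $H^{\epsilon_2}_{\min}(Z \mid \theta_A K) - \ell_\ec - 2\log(1/\epsilon_C)$, after which the hypothesis $H^\epsilon_{\min}(Z\mid K\theta) \geq n\lambda_\epsilon(n)$ plugs in directly.

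Combining the four bounds and solving for $\ell$ gives exactly the expression in~\eqref{thm:eq,SecA}, with the factor $\tfrac12$ coming from the $-\ell$ term in step two and the constants collecting the smoothing losses $2\log(1/\epsilon_C)$, $\log 1/(\epsilon_1-\epsilon_2)^2$, and the $-1$ from splitting. I expect no individual step to be conceptually hard, since the hypothesis already does the heavy lifting; the main care is in ordering the reductions so that the min-entropy splitting is applied only after the quantum side-information has been classicalized via~\eqref{app:MinBoundPsucc}, as splitting relies crucially on classicality of the conditioning~\cite{splitting}. The bookkeeping of smoothing parameters (ensuring $\epsilon_A > 4\epsilon_1 > 4\epsilon_2$) and making sure the slack terms line up with those in the definition of $r_\ot$ is the only place where a mistake is likely.
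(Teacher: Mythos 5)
Your proposal is correct and reproduces, step for step, the paper's own argument in Appendix~\ref{app:OTanalysis}: privacy amplification against infinite-dimensional adversaries, a chain rule to peel off $S_{\bar D}$ (which produces the overall factor $\tfrac12$), the Koenig--Wehner bound~\eqref{app:MinBoundPsucc} to absorb $Q_\out$ into $\cP^\cM_\suc$, min-entropy splitting to produce $D$ (correctly deferred until the side-information is purely classical), a final chain rule to strip $W$ and the hash descriptions, and then the hypothesis $H^\epsilon_{\min}(Z|K\theta)\geq n\lambda_\epsilon(n)$. One small bookkeeping caveat: tracing the constants as you describe actually yields $r_\ot = \tfrac12(\lambda_{\epsilon_2}(n)-r_\ec) - \tfrac1n(\log\tfrac1{\epsilon_C}+\log\tfrac1{(\epsilon_1-\epsilon_2)^2}+1)$ rather than the expression printed in~\eqref{thm:eq,SecA}, i.e.\ there is an extra global factor $\tfrac12$ and a sign flip on the $\log\tfrac1{(\epsilon_1-\epsilon_2)^2}$ and $-1$ terms that the theorem statement omits; your derivation matches the paper's proof rather than the (apparently mistyped) formula for $r_\ot$, so do not claim exact agreement with the displayed formula.
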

Note that when the right-hand side of~\eqref{thm:eq,SecA} is negative, a secure implementation of the OT protocol is not possible.

Let us consider the case where Bob's quantum channel $\cF$ is such that the success probability to send classical bits above the classical capacity $\cC_\cl(\cF)$  decreases exponentially
\begin{equation}\label{eq:StrongConvAss}
 \cP^{\cF^{\otimes n}}_{\suc}(nR)\leq 2^{-n\xi(R-\cC_\cl(\cF))} \,  .
\end{equation}
This property is generally referred to as \emph{strong converse} for the classical capacity. Note that for channels $\cF$ for which it is only known that property~\eqref{eq:StrongConvAss} is satisfied for a rate $C_{\text{sc}} >\cC_\cl(\cF)$ usually referred to as a strong converse capacity of $\cF$, the following discussion holds similarly with $\cC_\cl(\cF)$ replaced by $C_{\text{sc}}$. 

A simple calculation shows that if Bob's memory is of the form $\cM=\cF^{\otimes\nu n}$ and~\eqref{eq:StrongConvAss} holds, then the condition 
\begin{equation}`
r_\ot  > \nu \cC_\cl(\cF) \, 
\end{equation}
is sufficient to obtain security for a large enough $n$. Moreover the length of the strings $s_0,s_1$ can be chosen as 
\begin{equation}
\ell = n \xi (r_\ot - \cC_\cl(\cF)) \, .
\end{equation}
A necessary condition for security is thus 
\begin{equation}\label{eq:URminusEC}
1/2(\lambda^{\epsilon} - r_{\ec}) > 0 \ . 
\end{equation}

Let us analyse the above condition in the asymptotic limit. We know according to~\eqref{app:AsymptoticURrate} and \eqref{app:IIDUR} that $\lambda_{\gauss}^\epsilon$ and $\lambda^\epsilon_{\iid}$ converge to $-1/2\log c(\delta)$, where we assume in the following for simplicity that $\delta_x=\delta_p=\delta$. Using the exponential deFinetti theorem or the post-selection technique applied to CV protocols~\cite{circac09,2013DeF}, it is easy to convince oneself that this bound holds in the asymptotic limit without any assumptions (e.g., Gauss or iid). This insight yields the asymptotic formula $\lambda_{\asym}(\delta) := -1/2\log c(\delta)$. 

The error-correction rate in the asymptotic scenario is given according to the Slepian-Wolf theorem~\cite{Slepian71} as the conditional Shannon entropy $H(X^A_\delta|X^B_\delta)$, where $X^A_\delta$ ($X^B_\delta$) is the outcome of Alice's (Bob's) coarse-grained $X$ measurement. We assume here that the state is symmetric with respect to $X$ and $P$. This assumption is reasonable because if both parties are honest, the state is an EPR state with one-sided loss. Let us denote the conditional variance of $X^A$ given $X^B$ by $V_{A|B}$~\footnote{If the covariance matrix of two Gaussian random variables $X$ and $Y$ is denoted by $\Gamma_{XY}$, then $V_{X|Y}= \det \Gamma_{XY}/V_Y$.}. Then, if $\delta \ll
 V_{A|B}$, we find with good approximation that $r_{\ec}= H(X^A_\delta|X^B_\delta) \approx h(X^A|X^B) - \log\delta$, where  $h(X^A|X^B)$ is the conditional differential Shannon entropy of $X^A$ given $X^B$. Hence, we obtain
\begin{align*}
\lambda_\asym - r_{\ec} & \approx \log \sqrt{e\pi} - h(X^A|X^B) \\ 
& = \log \sqrt{e\pi\hbar} - \log\left(\sqrt{2\pi e V_{A|B}}\right)  \\ 
& = \log \left(\sqrt{\frac{\hbar}{2V_{A|B}}}  \right) \, .
\end{align*}  

In order to satisfy~\eqref{eq:URminusEC}, we need that $V_{A|B}<\sqrt{\hbar/2}$. Given a Gaussian state with covariance matrix as in~\eqref{eq:EPR} (that is, an EPR state with one-sided losses $1-\tau$  and $\xi=0$), the condition above reads as ($\hbar = 2$) 
\begin{equation}
V_{A|B} = \frac{(1-\tau)V+\tau} {(1-\tau) + \tau V} < 1 \,  .
\end{equation}
Since $V\geq 1 $, the condition can only be satisfied if the transmissivity $\tau > 1/2$ and for a non-trivial squeezing $V>1$.

\section{Bit commitment} \label{app:BC}

\subsection{Security definitions} \label{app:BCSecDef}

Let us first introduce the notation. In the bit commitment phase, Bob inputs a bit $C$ to which he commits. In the open phase Alice outputs a bit $\tilde C$ and a flag $F$, where $F\in \{\text{accept},\text{reject}\}$ depending whether Alice accepts or rejects the commitment. 

We use composable security definitions adapted from~\cite{Koenig2012}. 
\begin{de}
A protocol between two parties Alice and Bob that consists of a commitment phase where Bob commits to a bit $C$ and an open phase in which Alice outputs $\tilde C$ and a flag $F\in \{\text{accept},\text{reject}\}$ is called an $(\epsilon_C,\epsilon_H,\epsilon_B)$-secure bit commitment protocol if the following conditions hold:  
\begin{itemize}
\item The protocol is $\epsilon_C$-correct. That is, if both parties are honest, then it holds that $\pr{ \ \tilde C \neq C \ |\ F=\text{accept} \ } \leq \epsilon_C$ and $ \pr{ F=\text{reject}} \leq \epsilon_C$. 
\item The protocol is $\epsilon_H$-hiding. That is, if Bob is honest then for any strategy of Alice with joint output state $\rho_{A'C}$, it holds after the commitment phase that
\begin{equation} 
\Vert \rho_{A'|C=0} - \rho_{A'|C=1} \Vert_1 \leq \epsilon_H \, . 
\end{equation}
\item The protocol is $\epsilon_B$-binding. That is, if Alice is honest, then for any strategy of Bob, there exists after the commitment phase a random variable $D$ in $\{0,1\}$ such that for any value $D'$ that Bob wants to convince Alice to accept, it holds that   
\begin{equation}\label{app:Binding}
\pr{ D' \neq D \ | F= \ accept \ } \leq \epsilon_B \, . 
\end{equation}
\end{itemize}
\end{de}

\subsection{Security analysis for bit commitment} \label{app:BCanalysis}

It is easy to see that the BC protocol is correct. The first condition $\pr{ \ \tilde C \neq C \ |\ F=\text{accept} \ } \leq \epsilon_C$ is satisfied due to the definition of the protocol. The second condition holds for sufficiently large $n$ due to the properties of typical sets. The protocol is perfectly hiding as Bob does not send any information to Alice during the commitment phase. So, the interesting case is to show that the protocol is binding as long as Bob's quantum memory satisfies some constraints. We start with a lemma. 

\begin{lem}\label{lem:GuessingSet}
Let $\rho_{XU}$ be an arbitrary state on $X$ and $U$, where $X$ is a classical system with alphabet $\cX$ and $U$ is arbitrary (possibly quantum). Moreover, let $B(x)\subset \cX$ for $x\in\cX$. Then for all $\epsilon\geq 0$, the optimal probability to correctly guess $Y$ in $B(X)$ given the system $U$ is upper bounded by 
\begin{equation}
\sup_{\cE}\PP_\rho[ \cE(U) \in B(X) ] \leq \max_{y}|B^{-1}(y)| 2^{-H^{\epsilon}_{\min}(X|U)} \, + 2\epsilon ,
\end{equation} 
where $B^{-1}(y)=  \{x \, | \, y \in B(x)\} $ and the supremum runs over all channels $\cE$ that map $U$ to $\cX$. 
\end{lem}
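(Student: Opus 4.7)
The plan is to reduce the ``guess-within-$B(X)$'' probability to the ordinary task of guessing $X$ from $U$, whose optimum is by definition $2^{-H_{\min}(X|U)}$, and then smooth via a standard trace-distance argument.

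First I would handle the unsmoothed case $\epsilon=0$. For an arbitrary guessing channel $\cE: U \to \cX$ with output $Y = \cE(U)$, I would construct an auxiliary guessing channel $\cE'$ that first runs $\cE$ and then, whenever $B^{-1}(Y)$ is nonempty, outputs a uniformly random element of $B^{-1}(Y)$ (and any fixed default otherwise). A direct computation on the joint $(X,Y)$-distribution gives
\begin{equation*}
\PP[\cE'(U) = X] = \sum_{x,\, y : y \in B(x)} \frac{\PP[X=x,\, Y=y]}{|B^{-1}(y)|} \geq \frac{\PP_\rho[\cE(U) \in B(X)]}{\max_y |B^{-1}(y)|},
\end{equation*}
so rearranging and invoking the operational characterization $\sup_{\cE'} \PP[\cE'(U) = X] = 2^{-H_{\min}(X|U)_\rho}$ yields the claim for $\epsilon = 0$ with no additive error.

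Next I would smooth. The event $\{\cE(U) \in B(X)\}$ is implemented by the POVM element $M = \sum_x \ketbra{x}{x} \otimes F_{B(x)}$ on $\rho_{XU}$, where $\{F_y\}$ describes $\cE$ on $U$ and $F_{B(x)} = \sum_{y \in B(x)} F_y$. Since $0 \leq M \leq \idty$, for any $\tilde\rho_{XU}$ with $\cP(\rho_{XU}, \tilde\rho_{XU}) \leq \epsilon$ one has $|\tr(M\rho_{XU}) - \tr(M\tilde\rho_{XU})| \leq \|\rho_{XU} - \tilde\rho_{XU}\|_1 \leq 2\cP(\rho_{XU}, \tilde\rho_{XU}) \leq 2\epsilon$. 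Choosing $\tilde\rho_{XU}$ to be the optimizer in the definition of $H^\epsilon_{\min}(X|U)_\rho$ and applying the $\epsilon = 0$ bound to $\tilde\rho$ yields the stated inequality after a supremum over $\cE$.

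The main obstacle I anticipate is careful bookkeeping in the smoothing step together with the possibly infinite alphabet $\cX$: one must verify that the supremum in the operational definition of $H_{\min}(X|U)$ is (at least approximately) attained by a valid channel valued in $\cX$, and that the $2\epsilon$ continuity constant still applies when the optimizer $\tilde\rho_{XU}$ is subnormalized (so that the trace deficit is absorbed into the purified-distance ball rather than contributing an extra term). Both points are standard in the infinite-dimensional smooth-entropy framework already cited in the paper, so no new technology should be required.
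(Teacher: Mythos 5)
Your proof is correct and follows essentially the same route as the paper's: establish the $\epsilon=0$ bound by relating the guess-within-$B(X)$ probability to the standard $X$-guessing probability (which equals $2^{-H_{\min}(X|U)}$), then smooth via the trace-distance bound $\Vert\rho-\tilde\rho\Vert_1\leq 2\cP(\rho,\tilde\rho)\leq 2\epsilon$. The only difference is cosmetic in the $\epsilon=0$ step: you construct an explicit randomized post-processing channel $\cE'$ and invoke the operational characterization of $H_{\min}$ directly, whereas the paper manipulates the conditional distribution $p(x|y)$ and applies data processing $H_{\min}(X|U)\leq H_{\min}(X|\cE(U))$ at the end---these are the same argument in two equivalent guises, and your concerns about subnormalization and the infinite alphabet are handled exactly as you anticipate.
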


\begin{proof}
We consider first the case $\epsilon =0$. 
Let $Y=\cE(U)$ for an arbitrary channel $\cE$ with range $\cX$, and $p(x,y)$ the corresponding joint distribution of $X$ and $Y$. We then compute
\begin{align*}
&\PP_\rho[ Y\in B(X)] \\
&= \sum_{x,y} \delta(y\in B(x)) p(x,y)  \\ 
& = \sum_y p(y) \sum_x \delta(y\in B(x))   p(x|y)  \\ 
& \leq \sum_y p(y) \max_{x'} p(x'|y)  \max_{y'}\sum_x \delta(y'\in B(x)) \\
& = \max_{y'}|B^{-1}(y')| \sum_{y} p(y) 2^{-H_{\min}(X|Y=y)} \\
& = \max_{y}|B^{-1}(y)| 2^{-H_{\min}(X|Y)_\rho} \\
& \leq \max_{y}|B^{-1}(y)| 2^{-H_{\min}(X|U)_\rho} \, .
\end{align*}
In the third equality, we used that the min-entropy of a distribution $q(x)$ is $-\log \max_x q(x)$, the forth equality uses a basic property of the classical conditional min-entropy, and the last inequality is due to the data-processing inequality $H_{\min}(X|U) \leq H_{\min}(X|\cE(U))$  (see e.g.~\cite{tomamichel:thesis}). Since the upper bound holds for any $\cE$ and is independent of $\cE$, we can also take the supremum over all $\cE$  concluding the result for $\epsilon=0$.  

In order to generalize the above estimate to $\epsilon >0 $, we take an arbitrary state $\tilde\rho_{XU}$ such that $\cP(\rho_{XU},\tilde\rho_{XU}) \leq \epsilon$. We denote the joint probability distribution obtained by applying an arbitrary strategy $\cE$ on $\rho_{XU}$ and $\tilde\rho_{XU}$ by $p(x,y)$ and $\tilde p(x,y)$, respectively. We then compute that  
\begin{align*}
 \PP_\rho[ Y\in B(X) ]& = \sum_{x,y} \delta(y\in B(x)) p(x,y)  \\ 
& = \sum_{x,y} \delta(y\in B(x)) \tilde p(x,y) \\
& \quad + \sum_{x,y} \delta(y\in B(x)) \left(p(x,y)-\tilde p(x,y)\right)\\
& \leq \max_{y}|B^{-1}(y)| 2^{-H_{\min}(X|U)_{\tilde\rho}} \\
& \quad + \Vert \rho_{XY} - \tilde \rho_{XY}\Vert_1 \,  , 
\end{align*}
where the last inequality follows from the result for $\epsilon =0$ applied to $\tilde\rho_{XU}$, and $\rho_{XY}= \cE(\rho_{XU})$ and similar for $\tilde\rho_{XY}$. Due to the monotonicity of the trace norm under channels, we have that $\Vert \rho_{XY} - \tilde \rho_{XY}\Vert_1 \leq \Vert \rho_{XU} - \tilde \rho_{XU}\Vert_1$. Finally, by using that for any two states $\sigma$ and $\eta$, the purified distance satisfies $\Vert \sigma -\eta\Vert_1 \leq 2 \cP(\sigma,\eta)$, and by taking the minimum over all states $\tilde\rho_{XU}$ with  $\cP(\rho_{XU},\tilde\rho_{XU})\leq \epsilon$, we arrive at the desired inequality. 
\end{proof}

In the following, additional to (Q1)-(Q3) and (BC1), we assume that Alice sets F=reject also if the condition $n/2-\varepsilon' \leq |I|\leq n/2+\varepsilon'$ is violated. Moreover, we define a general verification set $B_{\epsilon_C}^m(Z_I)$, where $m=|Z_I|$ indicates the length of the string and $\epsilon_{C}$ the probability that $Y_I$ is in $B_{\epsilon_C}^m(Z_I)$. We discuss the proper choice of the verification set $B_{\epsilon_C}^m(Z_I)$ as the set of typical subsequences after the proof of the following theorem. 

\begin{thm}\label{thm:Binding}
Let us assume that $H^\epsilon_{\min}(Z|K \theta) \geq  n \lambda_\epsilon(n)$ and Bob's memory channel is given by $\cM$. Then, the BC protocol with verification set $B_{\epsilon_C}^m(z)$, $z\in\cX^m$, is $\epsilon_B$-binding with
\begin{equation}
\epsilon_B \leq   V^{\varepsilon'}_{\epsilon_C}  \cP^\cM_{\suc}\left( \lfloor  \frac n 2 \lambda_{\epsilon_1}(n)  - \log\frac 1{(\epsilon_2-\epsilon_1)^2} -1  \rfloor\right) + 2\epsilon_2 \, 
\end{equation}
where $V_{\epsilon_C}^{\varepsilon'} := \max_{y}|[B^{n/2+\varepsilon'}_{\epsilon_C}]^{-1}(y)|$ and $\epsilon_B/2 > \epsilon_2 > \epsilon_1\geq 0$ arbitrary. 
\end{thm}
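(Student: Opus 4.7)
The plan is to chain together four reductions in the order: uncertainty relation, min-entropy splitting, classical-capacity bound, and the guessing Lemma~\ref{lem:GuessingSet}. Modeling a cheating Bob by an encoding $\cE$ producing classical $K$ and quantum $Q_\inn$, followed by the memory channel $Q_\out=\cM(Q_\inn)$, his side information in the opening phase is $KQ_\out$. Even though Bob never receives $\theta_A$, we will upper-bound his cheating probability by generously giving him $\theta_A$ as well; the random variable $\tilde D$ coming out of min-entropy splitting will then be a function of the classical data $(Z,K,\theta_A)$.

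First I would invoke the uncertainty relation $H^{\epsilon_1}_{\min}(Z|K\theta_A)\ge n\lambda^{\epsilon_1}(n)$, decompose $Z=Z_0Z_1$ according to $\theta_A$, and apply the min-entropy splitting theorem with the purely classical side information $K\theta_A$: this yields a binary $\tilde D$ with
\[
H^{\epsilon_1}_{\min}(Z_{\tilde D}\mid \tilde D K\theta_A)\ \ge\ \tfrac{n}{2}\lambda^{\epsilon_1}(n)-1.
\]
Next, I would fold in the quantum register $Q_\out$ via the key inequality~\eqref{app:MinBoundPsucc} (with $U=Z_{\tilde D}$, $V=\tilde D K\theta_A$, $\epsilon'=\epsilon_2-\epsilon_1$), obtaining
\[
2^{-H^{\epsilon_2}_{\min}(Z_{\tilde D}\mid Q_\out \tilde D K\theta_A)}\ \le\ \cP^\cM_{\suc}\!\Bigl(\bigl\lfloor\tfrac{n}{2}\lambda^{\epsilon_1}(n)-1-\log\tfrac{1}{(\epsilon_2-\epsilon_1)^2}\bigr\rfloor\Bigr).
\]

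Now I would set the committed bit to $D:=\overline{\tilde D}$, so that the binding-violation event $\{D'\ne D,\ F=\text{accept}\}$ equals $\{D'=\tilde D,\ F=\text{accept}\}$. On this event Bob outputs a string $Y$ whose restriction to $I^{\tilde D}_A$ lies in $B^{|I^{\tilde D}_A|}_{\epsilon_C}(Z_{\tilde D})$. Since Alice additionally rejects if $|I^{\tilde D}_A|>n/2+\varepsilon'$, the relevant verification set has inverse-image size at most $V^{\varepsilon'}_{\epsilon_C}$. Applying Lemma~\ref{lem:GuessingSet} with $X=Z_{\tilde D}$ and $U=Q_\out K\tilde D\theta_A$ then bounds the binding-violation probability by
\[
V^{\varepsilon'}_{\epsilon_C}\cdot 2^{-H^{\epsilon_2}_{\min}(Z_{\tilde D}\mid Q_\out \tilde D K\theta_A)}+2\epsilon_2,
\]
which combined with the display above yields exactly the claimed $\epsilon_B$.

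The main obstacle is the correct ordering of the three reductions: min-entropy splitting must be performed \emph{before} introducing the quantum side information $Q_\out$, since the splitting theorem fails with quantum conditioning; yet the guessing lemma ultimately requires the full side information $Q_\out$ to be present. Bridging the two is precisely the role of the capacity bound~\eqref{app:MinBoundPsucc}, and checking that $\tilde D$ (being a deterministic function of classical $Z,K,\theta_A$) may be transported across that inequality as an additional classical register is the small but delicate point to verify. A secondary technicality is controlling the length of $I^{\tilde D}_A$ via the extra rejection rule $\bigl||I|-n/2\bigr|\le\varepsilon'$, which is what makes the factor $V^{\varepsilon'}_{\epsilon_C}$ well-defined and independent of Bob's strategy.
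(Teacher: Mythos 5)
Your proposal is correct and follows essentially the same route as the paper: uncertainty relation, then min-entropy splitting on purely classical side information, then the capacity bound Lemma~\ref{lem:QMtoCap} to fold in $Q_\out=\cM(Q_\inn)$, and finally Lemma~\ref{lem:GuessingSet} to convert a min-entropy bound into a bound on the probability of landing in the verification set, with the extra rejection rule on $|I|$ controlling $V^{\varepsilon'}_{\epsilon_C}$. Two small things you make explicit that the paper leaves implicit are worth noting: you set $D:=\overline{\tilde D}$ (the paper writes $Z_{\bar D}$ directly in the splitting step, which is a slight notational shortcut), and you explicitly grant the dishonest Bob $\theta_A$ as free side information even though the BC protocol never sends it, which is exactly what justifies conditioning on $\theta_A$ throughout the chain.
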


\begin{proof} 
We use the same notation as in the proof for OT and denote Bob's system after the commitment phase by $B'=Q_{\out} K$ where $K$ is a classical register and $Q_{\out}=\cM(Q_\inn)$.
Let us denote by $Z_0 $ and $Z_{1}$ the substrings of $Z$ in which Alice chose basis $0$ and $1$, respectively. According to~\eqref{app:Binding}, we have to show that there exists a random variable $D$ such that the probability that Bob convinces Alice that his commitment was $\bar D$ is smaller than $\epsilon_B$. Let us denote Bob's system after the commitment phase by $B'$ and his opening strategy by $\cE$ from $B'$ to $\cZ$. Since Alice accepts only if Bob can answer correctly with a string $Y=\cE(B')$ in $B_{\epsilon_C}(Z_{\bar D})$, the probability can be bounded by Lemma~\ref{lem:GuessingSet} as
\begin{align*}
\PP[ \cE(B') \in B_{\epsilon_C}(X_{\bar D})|F=\text{accept}] & \leq V_{\epsilon_C}^{\varepsilon'} 2^{-H^{\epsilon}_{\min}(X_{\bar D}|B'D)} \\
& \quad + 2\epsilon \, ,
\end{align*} 
where we used that $V_{\epsilon_C}^{\varepsilon '} \geq  \max_{y}|[B_{\epsilon_C}^{|Z_D|}]^{-1}(y)|$. Hence, it remains to find a lower bound on $H^{\epsilon}_{\min}(Z_{\bar D}|B'D)$. For that we follow a similar strategy as in the proof of OT. 

In order to define the random variable $D$, we use the min-entropy splitting theorem (Lemma~\ref{lem:MinSplit}). This lemma tells us that there exists a random variable $D$ such that 
\begin{align}
H^{\epsilon}_{\min}(Z_{\bar D} |DK\theta_A) \geq \frac 12 H^{\epsilon}_{\min}(Z_0Z_1|K\theta_A) -1 \, .
\end{align}
This technique allows us to define a state $\rho_{Z_DKQ_{\out}D}$ such that $\rho_{Z_DKQ_{\out}(D=d)} = \rho_{Z_dKQ_{\out}}$. We then bound the smooth min-entropy of this state by using~\eqref{app:MinBoundPsucc} (see also Lemma~\ref{lem:QMtoCap})
\begin{align*}\label{eq:minSpliBC}
& H^{\epsilon}_{\min}(Z_{\bar D}|K\cM(Q_{\inn})\theta_A D) \\
& \geq    -\log\left( \cP^\cM_{\suc}\left(\lfloor H^{\epsilon_1}_{\min}(Z_{\bar D}| K\theta_A D ) - \log\frac 1{(\epsilon-\epsilon_1)^2} \rfloor\right)\right) \, ,
\end{align*}
which concludes the proof by setting $\epsilon=\epsilon_2$. 
\end{proof}

Let us assume that the memory channel satisfies a strong converse similar to~\eqref{eq:StrongConvAss} given by 
\begin{align}
\cP_{\suc} ^{\cF^{\otimes \nu n}}(nR) \leq 2^{-n\xi (R-\nu C(\cF))}\, .
\end{align}
According to Theorem~\ref{thm:Binding}, we obtain an $\epsilon_B$-binding protocol if $\epsilon_B-2\epsilon_2$ is smaller than 
\begin{align*}
2^{- n \left[ (\xi /2)( \lambda_{\epsilon_1}(n)  - (1/n) \log\frac 1{(\epsilon_1-\epsilon_2)^2} - 1/n ) -\log V_\varepsilon^{\varepsilon'} /n- C(\cF) \right] } \,  . 
\end{align*}
Hence, the necessary condition for obtaining security for sufficiently large $n$ is given by 
\begin{align} \label{eq:BCcond1}
&\frac{\xi}{2}  \lambda_{\epsilon_1}(n)(\delta_x)    - \frac{\log V_\varepsilon^{\varepsilon'}}{n} - \nu C(\cF) \\ &  - \frac 1 n ( \log\frac 1{(\epsilon_B-2\epsilon_2)(\epsilon_1-\epsilon_2)^2} +1 ) > 0  \, , 
\end{align}
where $\epsilon_1,\epsilon_2$ can be chosen arbitrarily according to $\epsilon_B/2 > \epsilon_2 > \epsilon_1\geq 0$.
  
The canonical choice for the verification set  $B^m_{\varepsilon}(z)$ is the set of $\varepsilon$-typical sequences corresponding to the output on Bob's side if Alice's outcome is $z$.  Let us now assume that the state shared by Alice and Bob is given by an EPR state where Bob's mode is sent through a fiber with transmissivity $\tau$. Thus, it is of form~\eqref{eq:EPR}. Then the outcome $X^A$ of a continuous $X$ measurement on Alice's side relates to Bob's outcome $X^B$ of an $X$ measurement by 
\begin{equation}
X^B = \sqrt{\tau} X^A + \cN(V_{B|A}) \, ,
\end{equation}
where $V_{B|A}$ denotes the conditional variance of $X^B$ given $X^A$ (which is independent of $X^A$ for~\eqref{eq:EPR}), and $\cN(V)$ denotes the normal distribution centered at $0$ with variance $V$. It follows that $1/\sqrt{\tau}X^B-X^A$ is distributed according to $\cN(V_{B|A}/\tau)$. Hence, in the case of continuous measurements, the typical set corresponds to the typical set of the normal distribution $\cN(V_{B|A}/\tau)$. Note that the state considered here is symmetric in $X$ and $P$.

Alice and Bob measure coarse-grained versions of $X_A$ and $X_B$. In order to re-scale Bob's outcome directly, it is convenient to choose different discretizations for Alice and Bob given by $\delta_A=\delta$ and $\delta_B= \sqrt{\tau} \delta$. Then Bob's discretized outcome is distributed according to the discretization of $X_A+ \cN(V_{B|A}/\tau)$ with binning $\delta$. Hence, the verification set is translation invariant and given by $B_\varepsilon^m(z) = z+ T^m_{\varepsilon}({V_{B|A}/\tau,\delta})$, where $T_{\varepsilon}^m({V_{B|A}/\tau,\delta})$ denotes the $\varepsilon$-typical sequences of length $m$ sampled according to the discretized normal distribution $ \cN(V_{B|A}/\tau)$ with binning $\delta$. 

We are interested in the inverse set $[B_\varepsilon^m]^{-1}(y)$, which due to the appropriate scaling of Bob's outcome is equal to $B_\varepsilon^m(z)$. Note that the size of the set of $\varepsilon$-typical sequences of length $m$ of a random variable $W$ is upper bounded by $2^{n(H(W)+\varepsilon)}$. Since the distribution of $X^B_{\sqrt{\tau}\delta}$ conditioned on $X^A_\delta$ is independent of the value of $X^A_\delta$, we have that the entropy of the conditional distribution for a fixed value of $X^A_\delta$ is equal to the average over all values of $X^A_\delta$. This then implies that $\log T^m_{\varepsilon}({V_{B|A}/\tau,\delta}) = n(H(X^B_{\sqrt{\tau}\delta}|X^A_\delta)+\varepsilon)$, which yields 
\begin{equation}
V_\varepsilon^{\varepsilon'} = (\frac{n}{2}+\varepsilon') (H(X^B_{\sqrt{\tau}\delta}|X^A_\delta)+\varepsilon)  \, .
\end{equation} 

As in the case of OT, let us consider condition~\eqref{eq:BCcond1} in the asymptotic limit, i.e., taking $\lambda_{\asym}(\delta)$. For simplicity, we again assume that $\sqrt{\tau} \delta \ll V_{A|B}$ such that we can approximate $H(X^B_{\sqrt{\tau}\delta}|X^A_\delta) \approx h(X^B|X^A) - \log\sqrt{\tau}\delta$. Then, we find that in the asymptotic limit the condition for security is given by 
\begin{align*}
\xi \lambda_\asym(\delta)  -  H(Y_{\sqrt{\tau}\delta}|X_\delta) &  \approx \xi \log\sqrt{e\pi} - h(Y|X) + \log\tau \\ 
& \quad  +  (\xi-1)\log\frac 1 \delta \, .
\end{align*}
The last term shows that the value of $\xi$ is very crucial, in the sense that if $\xi>1$, we can increase the value arbitrarily by making $\delta$ small. Unfortunately, in the case of bosonic channels $\xi=1$.  For $\xi=1$ and a state given in~\eqref{eq:EPR}, the condition $\xi \lambda_\asym(\delta)  -  H(Y_{\sqrt{\tau}\delta}|X_\delta) >0$ translates to ($\hbar =2$)
\begin{equation}
\frac{(1-\tau)V + \tau}{\tau V} < 1 
\end{equation}
which is satisfied if $\tau >  V/(2V-1)$. Hence, a non-trivial squeezing is required and $\tau >1/2$ in the limit $V\rightarrow \infty$.

\section{Model of the EPR source} \label{app:EPR}

For the simulations used to generate Fig.~\ref{fig:UR} and~\ref{fig:SecurityGauss}, we assume an EPR state with variance $V=\hbar \cosh 2r$ with $r$ the squeezing parameter (see, e.g.,~\cite{Weedbrook12}). Alice's mode is loss-free and Bob's mode is sent through a fiber with transmissivity $\tau$. We further assume in some cases a non-zero excess noise $\xi$. Then, the covariance matrix of the Gaussian state shared between Alice and Bob is given by  
\begin{equation} \label{eq:EPR}
 \left( \begin{array}{cc}
V I & \sqrt{\tau(V^2-1)} Z \\
\sqrt{\tau(V^2-1)} Z  & V_B(\tau,\xi) I \\
\end{array} \right)
\end{equation} 
with $I$ the identity in $\mathbb C^2$, $Z=\text{diag}(1,-1)$ and $V_B(\tau,\xi)= \tau V +(1-\tau)\hbar/2 + \tau \xi$. Since large distances are not particularly required for the usefulness of OT and BC, we assume in the plots that $\tau=0.94$ and $\xi=0.0005\hbar$. Moreover, we use a variance of $V=3\hbar$ which corresponds to a squeezing strength of about $10.8$dB.

\section{Bosonic Gaussian memory channels} \label{app:GaussianChannels}
 
Let us consider the security conditions for the OT and BC protocol if Bob's memory channel (or a part of it) can be modelled by a phase-insensitive Gaussian channel that acts on a single-mode covariance  matrix as  
\begin{equation} \label{eq:BosChannel}
\Gamma \mapsto T \Gamma T^{T} + N \, ,
\end{equation} 
where $T=\text{diag}(\sqrt{t}, \sqrt{t}) $ and $N=\text{diag}(v,v)$ such that $v \geq 0$ and $v \geq (t -1)$. In the following, we denote the corresponding quantum channels by $\cF_{t,v}$. 

For phase-insensitive Gaussian channels a strong converse has recently been established~\cite{wilde2014,bardhan2014,bardhan2014B}. Note first that the classical capacities for bosonic channels are only bounded under a mean-energy constraint, i.e., if the mean photon number $N_{\av}$ of the average code state is finite. Then, the classical capacities are given by~\cite{giovannetti2013A,giovannettiB} 
\begin{equation}
C(\cF_{t,v}|N_{\av}) = g\left(t N_\av +(t+v-1)/2\right) - g\left(\frac{t+v-1}{2}\right) \, , 
\end{equation}
where $g(x)=(x+1)\log(x+1) - x\log x$. 

For a strong converse bound to hold, a mean-photon number constraint is not sufficient and one has to impose a maximal-photon-number constraint. More precisely, let $\rho^n$ be the average channel input for $n$ channel uses of $\cE_{t,v}$. Then we say that a family of codes $\{\rho^n\}_n$ satisfies a maximal-photon-number constraint (MPNC) with $N_{\max}$ if~\cite{wilde2014}
\begin{equation}\label{eq:MPNC}
\tr\left( \Pi_{nN_{\max}} \rho^n \right) \geq 1-\delta(n)
\end{equation}  
where $\Pi_{nN_{\max}}$ denotes the projector onto the subspace with at most $nN_{\max}$ photons and $\delta(n)$ decays exponentially in $n$. 

The strong converse theorem for any phase-insensitive channel $\cF$ from~\cite{bardhan2014B} then says that the success probability for the transmission under the MPNC decays as  
\begin{equation} 
\cP_{\suc}^{\cF^{\otimes n}}(nR|N_{\max}) \leq 2^{-n(R-C(\cF|N_{\max})-\delta_1 } + 2^{n\delta_2} + \delta_3(n) \, ,  
\end{equation} 
where $\delta_1,\delta_2$ are arbitrary small constants and $\delta_3(n)=\sqrt{\delta(n) + \sqrt{\delta(n)} + \delta_4(n)}$ with $\delta(n)$ given in~\eqref{eq:MPNC} and $\delta_4(n)$ is exponentially decreasing in $n$. Hence, we have a strong converse of the form~\eqref{eq:StrongConvAss} with $\xi =1$, and we can analyse the security conditions for OT and BC given in~\eqref{eq:Cond} and~\eqref{eq:CondBC} in the main text. 

For the plots in the main text we consider the most common phase-insensitive channel given by a thermal-loss channel with additive noise. A thermal-loss channel can be modeled by mixing the mode by a beam splitter with transmissivity $\eta$ with a thermal state with average photon number $N_\ther$. In terms of the parameters $t,v$ in~\eqref{eq:BosChannel}, it is expressed by $t=\eta$ and $v=(1-\eta)(1+2 N_\ther)$. Moreover, if we include additional additive Gaussian noise $V_n$, the parameters are $t=\eta$ and $v=(1-\eta)(1+2N_\ther)+V_n$.

\section{Technical lemmas} \label{app:TechLem}

\begin{lem}\label{lem:MinToReny}
Let $X$ and $Y$ be possibly infinite discrete classical systems. It then holds for any $1<\alpha \leq 2$ that 
\begin{equation}
H^\epsilon_{\min}(X|Y) \geq H_\alpha(X|Y) - \frac{1}{\alpha-1} \log\frac{2}{\epsilon^2} \, .
\end{equation} 
\end{lem}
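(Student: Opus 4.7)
The plan is to adapt the standard finite-alphabet argument (as in \cite{Tomamichel08}) via a truncation-of-tails construction, and then verify that nothing breaks when the alphabet is merely countable by appealing to the infinite-dimensional smooth min-entropy framework of \cite{Furrer10,berta2011}. For classical $X,Y$ with joint distribution $p(x,y)$, the starting point is the guessing-probability identity
\begin{equation*}
2^{-H_{\min}(X|Y)_\rho} \;=\; \sum_y \max_x p(x,y) \;=\; \sum_y p(y)\max_x p(x|y) \, ,
\end{equation*}
which still holds for countable alphabets since the right-hand side is a convergent sum bounded by $1$.

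The main construction is to set the threshold
\begin{equation*}
\lambda := 2^{-H_\alpha(X|Y)}\left(\tfrac{2}{\epsilon^2}\right)^{\!1/(\alpha-1)} ,
\end{equation*}
define the bad set $B=\{(x,y):p(x|y)>\lambda\}$, and truncate: $\tilde p(x,y):=p(x,y)\,\mathbf 1[(x,y)\notin B]$. A Markov-type estimate using $\alpha>1$ gives
\begin{equation*}
\Pr[B] \leq \sum_{(x,y)\in B} p(x,y)\Bigl(\tfrac{p(x|y)}{\lambda}\Bigr)^{\alpha-1} \leq \lambda^{-(\alpha-1)} \sum_{x,y} p(y)\,p(x|y)^\alpha = \lambda^{-(\alpha-1)}2^{(1-\alpha)H_\alpha(X|Y)} = \tfrac{\epsilon^2}{2} .
\end{equation*}
For the classical state $\tilde\rho_{XY}$ this yields $F(\rho_{XY},\tilde\rho_{XY})=(1-\Pr[B])^2\geq 1-\epsilon^2$, i.e.\ $\cP(\rho,\tilde\rho)\leq \epsilon$. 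Meanwhile, the truncation enforces $\tilde p(x,y)\leq p(y)\lambda$ for all surviving pairs, so
\begin{equation*}
2^{-H_{\min}(X|Y)_{\tilde\rho}} \leq \sum_y p(y)\,\lambda = \lambda ,
\end{equation*}
which after taking logarithms is exactly the desired inequality when combined with the definition \eqref{eq:SmoothMin} of the smooth min-entropy.

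The only place where infinite alphabets could cause trouble is convergence and the validity of the guessing-probability characterization. Convergence is automatic: since $p(x|y)\leq 1$ and $\alpha\geq 1$ one has $p(x|y)^\alpha\leq p(x|y)$, hence $\sum_y p(y)\sum_x p(x|y)^\alpha\leq 1$, so $H_\alpha(X|Y)\geq 0$ is finite and the Markov step is well defined. The truncated state $\tilde\rho_{XY}$ is trace class (being dominated by $\rho_{XY}$), so it lies in the $\epsilon$-ball used in the infinite-dimensional smooth min-entropy of \cite{Furrer10,berta2011}, and the same references guarantee that $H_{\min}$ is still computed by the guessing-probability formula in the classical setting. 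I expect this justification to be the main (but straightforward) obstacle; the algebraic core of the proof is essentially the same as in the finite case.
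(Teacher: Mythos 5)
Your proof is correct, but it takes a genuinely different route than the paper's. The paper treats the finite-dimensional inequality of Tomamichel et al.\ as a black box and lifts it to countable alphabets via the finite-dimensional approximation machinery of Furrer et al.: project onto a finite block $P_k$, apply the finite result to $P_k\rho P_k$, then pass to the limit $k\to\infty$ using $H_\alpha(X|Y)_{P_k\rho P_k}\to H_\alpha(X|Y)_\rho$ (absolute convergence of the defining sums) and finally let the slack $\delta\to 0$. You instead reprove the bound directly by the thresholding argument that underlies the classical case of the cited result: cut off the conditional probabilities at $\lambda=2^{-H_\alpha(X|Y)}(2/\epsilon^2)^{1/(\alpha-1)}$, Markov-bound the discarded mass by $\epsilon^2/2$, and read off the min-entropy of the truncated state. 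Both arguments are sound; the paper's is shorter given the cited approximation lemma, while yours is more self-contained and makes explicit that nothing in the finite-alphabet proof actually uses finiteness beyond summability, which is arguably the clearest reason the extension is painless. Two points worth a sentence in a polished write-up: (i) your fidelity step implicitly uses that for normalized $\rho$ and sub-normalized $\tilde\rho$ the generalized fidelity reduces to the ordinary one, so $\cP(\rho,\tilde\rho)=\sqrt{1-(1-\Pr[B])^2}\le\sqrt{2\Pr[B]}\le\epsilon$; and (ii) the $\max_x$ in the guessing-probability expression should formally be a supremum, though it is attained because a countable probability vector has only finitely many atoms above any positive threshold. Neither is a gap.
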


\begin{proof}
The lemma has been shown for finite-dimensional systems in~\cite{Tomamichel08}. An easy way to show it in the infinite-dimensional case is by means of a finite-dimensional approximation result shown in~\cite{Furrer10}. This approximation allows us to obtain 
\begin{align}
H^\epsilon_{\min}(X|Y)_\rho & \geq H^{\epsilon-\delta}_{\min}(X|Y)_{P_k\rho P_k} \\
& \geq H_\alpha(X|Y)_{P_k\rho P_k} - \frac{1}{\alpha-1} \log\frac{2}{(\epsilon-\delta)^2} \, 
\end{align}
where $P_k= P_k^X \otimes P_k^Y$ is a projector onto a finite-dimensional subspace such that $P_k\rho P_k$ is $\delta$-close to $\rho$, for some $\delta >0$. Note that such a projection always exists for any $\delta$. Next, we use that $H_\alpha(X|Y)_{P_k\rho P_k} \rightarrow H_\alpha(X|Y)_{\rho }$ for $k\rightarrow \infty$. This limit follows simply since all the sums involved in the definition of the $\alpha$ entropy converge absolutely, and thus, can be rearranged. We get as conclusion that 
\begin{align}
H^\epsilon_{\min}(X|Y)_\rho & \geq H_\alpha(X|Y)_{\rho} -  \frac{1}{\alpha-1} \log\frac{2}{(\epsilon-\delta)^2} \, 
\end{align}
holds for any $\delta>0$. And thus in the limit $\delta$ to 0 we obtain the desired result.  
\end{proof}

The following statement has been shown in~\cite{Koenig2012} and generalizes straightforwardly to infinite dimensions using the same strategy as in the proof above based on the approximation theorem in~\cite{Furrer10}. 
\begin{lem} \label{lem:QMtoCap}
Let $\rho_{XKQ_{\inn}}$ be a state of classical random variables $XK$ correlated with a quantum system $Q_{\inn}$ and $\cF$ a quantum channel from $Q_{\inn}$ to $Q_{\out}$. Then, it holds that 
\begin{equation}
H^{\epsilon +\epsilon'}_{\min}(X|\cF(Q_{\inn})K) \geq -\log \cP_{\suc}(k_{\epsilon,\epsilon'}) \, ,
\end{equation} 
where  $k_{\epsilon,\epsilon'} = \lfloor H^{\epsilon}_{\min}(X|K)_\rho - \log1/\epsilon'^2\rfloor $.
\end{lem}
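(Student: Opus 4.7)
The plan is to mirror the proof of the previous lemma: establish the statement on a finite-dimensional truncation of $\rho$ using the known finite-dimensional version from \cite{Koenig2012}, then pass to the infinite-dimensional setting via the approximation theorem of \cite{Furrer10}. Concretely, I would introduce a sequence of finite-rank projectors $P_m = P^X_m\otimes P^K_m\otimes P^Q_m$, where $P^X_m, P^K_m$ truncate the classical alphabets of $X$ and $K$ to their first $m$ symbols and $P^Q_m$ projects $Q_\inn$ onto a finite-dimensional subspace (e.g.\ the span of its first $m$ Fock states), and define $\rho^m := P_m\rho P_m/\tr(P_m\rho P_m)$. The approximation result guarantees $\cP(\rho,\rho^m)\leq \delta(m)$ with $\delta(m)\to 0$.

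Next, the finite-dimensional version of the lemma, applied to $\rho^m$, yields for any $a,b>0$
\begin{equation*}
H^{a+b}_{\min}(X|\cF(Q_\inn)K)_{\rho^m}\geq -\log\cP_{\suc}\bigl(\lfloor H^{a}_{\min}(X|K)_{\rho^m}-\log 1/b^2\rfloor\bigr),
\end{equation*}
where $\cP_{\suc}$ refers to the unrestricted (possibly infinite-dimensional) channel $\cF$; restricting the admissible code states to the range of $P^Q_m$ can only decrease the success probability, so the bound remains valid with the unrestricted $\cP_{\suc}$ on the right.

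I would then transfer the bound back to $\rho$ via the triangle inequality for the purified distance, in exact analogy with the previous lemma. Any $\sigma$ with $\cP(\rho^m,\sigma)\leq s-\delta(m)$ satisfies $\cP(\rho,\sigma)\leq s$, giving $H^{s}_{\min}(\cdot)_\rho\geq H^{s-\delta(m)}_{\min}(\cdot)_{\rho^m}$; symmetrically, $H^{t+\delta(m)}_{\min}(\cdot)_{\rho^m}\geq H^{t}_{\min}(\cdot)_\rho$. Choosing $a=\epsilon+\delta(m)$ and $b=\epsilon'-2\delta(m)$ (so that $a+b=\epsilon+\epsilon'-\delta(m)$), and using that $-\log\cP_{\suc}$ is non-decreasing in its integer argument, one arrives at the statement with $\epsilon'$ replaced by $\epsilon'-2\delta(m)$. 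Letting $m\to\infty$ recovers the desired bound.

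The only real obstacle is the bookkeeping of smoothing parameters, combined with the mild discontinuity introduced by the floor function in the limit. The latter can be handled, as in the proof of the previous lemma, by first establishing the bound for an arbitrary $\epsilon''<\epsilon'$ and then taking $\epsilon''\nearrow\epsilon'$. With these routine adaptations, the proof reduces directly to the finite-dimensional statement of \cite{Koenig2012}.
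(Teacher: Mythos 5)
Your proposal matches the paper's approach exactly: the paper states only that the lemma "generalizes straightforwardly to infinite dimensions using the same strategy as in the proof above based on the approximation theorem in \cite{Furrer10}," i.e., truncate with finite-rank projectors, invoke the finite-dimensional bound of \cite{Koenig2012}, transfer via the purified-distance triangle inequality, and pass to the limit. Your elaboration (including the observation that restricting code states to the truncated input space only lowers $\cP_{\suc}$, and the handling of the floor-function discontinuity by approaching $\epsilon'$ from below) is a faithful and slightly more explicit rendering of that one-line argument.
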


The technique of min-entropy splitting is due to~\cite{wulli}, and used as the following Lemma in~\cite{damgaard2007,schaffner2010}, which generalizes by a simple application of the approximation in~\cite{Furrer10} to arbitrary alphabet sizes. 
\begin{lem}\label{lem:MinSplit}
Let $X_0,X_1,Y$ be classical random variables. Then there exists a random variable $D$ with range $\{0,1\}$ such that 
\begin{equation}
H_{\min}^{\epsilon} (X_D|DY) \geq \frac 12 H_{\min}^{\epsilon} (X_0X_1|Y) - 1   \, .
\end{equation}    
\end{lem}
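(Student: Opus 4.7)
My plan mirrors that of Lemma~\ref{lem:MinToReny}: handle the finite-alphabet case directly via a threshold construction (the classical Wullschleger argument~\cite{wulli}), then lift to countably infinite alphabets using the finite-rank approximation theorem of~\cite{Furrer10}.

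For the finite-alphabet part, set $\alpha = H_{\min}^\epsilon(X_0 X_1|Y)$ and let $\tilde p$ be a smoothing of the joint distribution realizing $H_{\min}(X_0 X_1|Y)_{\tilde p} = \alpha$. Declare $D(x_0, y) = 0$ whenever $\tilde p(x_0|y) \leq 2^{-\alpha/2}$ and $D(x_0, y) = 1$ otherwise. On the event $\{D = 0\}$ the conditional marginal of $X_0$ is at most $2^{-\alpha/2}$ by construction, contributing at most $2^{-\alpha/2}$ to the optimal guessing probability of $X_D$ given $YD$. On $\{D = 1\}$ there are at most $2^{\alpha/2}$ ``heavy'' values of $x_0$ at each fixed $y$ (each has conditional marginal exceeding $2^{-\alpha/2}$), so summing the joint probability $\tilde p(x_0, x_1|y)$ over those heavy $x_0$'s and using the averaged min-entropy bound $\sum_y \tilde p(y)\max_{x_0, x_1}\tilde p(x_0, x_1|y) \leq 2^{-\alpha}$ yields a contribution of at most $2^{\alpha/2}\cdot 2^{-\alpha} = 2^{-\alpha/2}$. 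Adding the two contributions gives $P_{\mathrm{guess}}(X_D|YD)_{\tilde p} \leq 2\cdot 2^{-\alpha/2}$, hence $H_{\min}(X_D|YD)_{\tilde p} \geq \alpha/2 - 1$. Since appending $D$ is a deterministic post-processing, the $\epsilon$-closeness of $\tilde p$ to $p$ is preserved and this transfers to the claim $H_{\min}^\epsilon(X_D|YD)_p \geq \alpha/2 - 1$.

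To lift to arbitrary alphabets, fix $\delta > 0$ and choose finite-rank projectors $P_k$ onto the first $k$ basis vectors of each register such that the normalized truncation $\rho^{(k)} = P_k \rho P_k/\tr(P_k \rho P_k)$ is within purified distance $\delta$ of $\rho_{X_0 X_1 Y}$. Applying the finite-alphabet splitting to $\rho^{(k)}$ yields a binary $D^{(k)}$ satisfying the splitting inequality on $\rho^{(k)}$ at smoothing level $\epsilon - \delta$. Because $D^{(k)} \in \{0,1\}$ ranges over a fixed finite alphabet while the distributions $\rho^{(k)}$ converge to $\rho$ in total variation on every finite sub-block, a tightness/diagonal argument produces a limiting binary random variable $D$ jointly distributed with $(X_0, X_1, Y)$. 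Invoking lower semi-continuity of the smooth min-entropy in the purified distance and letting $\delta \to 0$ then yields the desired inequality on $\rho$.

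The main obstacle I expect is the bookkeeping of smoothing parameters through the truncation, so that the $\tfrac{1}{2}\alpha - 1$ rate survives the limit, together with identifying the limit $D$ as a genuine random variable on the original probability space. The former is routine; the latter is handled by tightness on the fixed binary alphabet and continuity of threshold-defined objects under total-variation convergence, exactly as in the proof of Lemma~\ref{lem:MinToReny}.
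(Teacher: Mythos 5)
The paper does not actually supply a proof of this lemma: it simply cites Wullschleger~\cite{wulli} and Damg\r{a}rd et al.~\cite{damgaard2007,schaffner2010} for the finite-alphabet statement and remarks that it ``generalizes by a simple application of the approximation in~\cite{Furrer10} to arbitrary alphabet sizes.'' So there is no paper proof to compare against line by line. Your finite-alphabet part is the standard Wullschleger threshold construction and is correct, modulo minor phrasing: the $D=1$ contribution should be read as
$\max_{x_1}\sum_{x_0\in A_y}\tilde p(x_0,x_1\mid y)\leq |A_y|\max_{x_0,x_1}\tilde p(x_0,x_1\mid y)$, then averaged over $y$, which indeed gives $2^{\alpha/2}\cdot 2^{-\alpha}$. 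The observation that extending both $\tilde p$ and $p$ by the same deterministic map $(x_0,y)\mapsto D$ preserves purified distance, so the bound transfers to $H_{\min}^\epsilon(X_D\mid DY)_p$, is exactly right.

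Where I would push back is on the lifting step, for two reasons. First, the finite-rank truncation is a heavier hammer than needed: the only combinatorial fact the Wullschleger argument uses is $|A_y|\leq 2^{\alpha/2}$, which follows from $\sum_{x_0}\tilde p(x_0\mid y)\leq 1$ and is therefore insensitive to whether the alphabet is finite or countably infinite. Nothing in your first paragraph breaks for infinite $\cX_0,\cX_1,\cY$; the only genuine infinite-dimensional subtlety is whether the supremum defining $H_{\min}^\epsilon(X_0X_1\mid Y)$ is attained, so that an optimal $\tilde p$ exists. Second, and more seriously, the ``tightness/diagonal argument produces a limiting binary random variable $D$'' is not routine as stated. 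Each $D^{(k)}$ is a function $\cX_0\times\cY\to\{0,1\}$ defined by a threshold on the \emph{truncated} conditional distributions; the threshold map is discontinuous at the cutoff, so total-variation convergence of $\rho^{(k)}$ does not by itself imply convergence of the $D^{(k)}$, and ``lower semi-continuity of the smooth min-entropy'' is not the right tool to identify the limit. What does work is a genuine diagonal argument over the countable set $\cX_0\times\cY$: since each $D^{(k)}$ takes values in $\{0,1\}$, one can extract a subsequence converging pointwise to some $D:\cX_0\times\cY\to\{0,1\}$, and then one has to re-verify the guessing-probability bound for this limit $D$ directly on an $(\epsilon-\delta)$-optimal smoothing of $\rho$, rather than inherit it from the $\rho^{(k)}$. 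If you instead run the threshold argument directly on a $\delta$-optimal smoothing of the infinite-alphabet $\rho$ and then send $\delta\to 0$, you face the same ``single $D$'' issue but avoid all the truncation bookkeeping; either way, the step that needs an actual argument is the selection of one fixed $D$ achieving the bound, and your sketch leaves that gap open.
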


\bibliography{libraryCVNoisy}
\bibliographystyle{apsrev4-1}

\end{document}